
\documentclass[10pt,twocolumn,twoside]{IEEEtran}

\IEEEoverridecommandlockouts                              	


\usepackage{amssymb,amsmath,amsfonts,amsthm}
\usepackage{algorithm,algorithmic}
\usepackage{cite}
\usepackage{float}
\usepackage{epsfig}
\usepackage{graphicx}
\usepackage{graphics}
\usepackage{subfigure}
\usepackage{url}
\usepackage{verbatim}
\usepackage{xspace}
\usepackage[usenames,dvipsnames]{color}
\usepackage{setspace}



\floatstyle{ruled}
\newfloat{model}{H}{mod}
\floatname{model}{\footnotesize Model}
\newfloat{notatio}{H}{not}
\floatname{notatio}{\footnotesize Notation}

\usepackage{url,changebar,bm,xspace,dsfont}
\let\mathbb=\mathds 


\newtheorem{theorem}{Theorem}

\newtheorem{defn}{Definition}
\newtheorem{prop}{Proposition}

\newtheorem{remark}{Remark}

\ifCLASSINFOpdf
 \else
\fi

\hyphenation{op-tical net-works semi-conduc-tor}

\usepackage{graphics} 
\usepackage{epsfig} 
\usepackage{algorithm,algorithmic}

\usepackage{amsmath,amsfonts,amssymb,amscd}
\usepackage{capt-of}
\usepackage{color}
\usepackage{cite}
\usepackage{verbatim}


\hyphenation{op-tical net-works semi-conduc-tor}

\begin{document}

\title{\LARGE \bf
Distributed Integer Balancing under Weight Constraints \\ in the Presence of Transmission Delays and Packet Drops}

\author{Apostolos~I.~Rikos,~\IEEEmembership{Member,~IEEE}
	\thanks{Apostolos~I.~Rikos is with the Department of Electrical and Computer Engineering at the University of Cyprus, Nicosia, Cyprus. E-mail:{\tt~arikos01@ucy.ac.cy}.}, 
Christoforos~N.~Hadjicostis,~\IEEEmembership{Senior Member,~IEEE} 
	\thanks{Christoforos~N.~Hadjicostis is with the Department of Electrical and Computer Engineering at the University of Cyprus, Nicosia, Cyprus, and also with the Department of Electrical and Computer Engineering at the University of Illinois, Urbana-Champaign, IL, USA. E-mail:{\tt~chadjic@ucy.ac.cy}.}
	\thanks{Parts of the results for distributed integer weight balancing under interval constraints in the presence of packet drops appear in \cite{2017:RikosHadj}. The present version of the paper includes complete proofs for convergence and proposes extensions to handle transmission delays over the communication links (not addressed in \cite{2017:RikosHadj}).}
}

\maketitle
\thispagestyle{empty}
\pagestyle{empty}

\providecommand{\keywords}[1]{\textbf{\textit{Index terms---}} #1}

\begin{abstract}

We consider the distributed weight balancing problem in networks of nodes that are interconnected via directed edges, each of which is able to admit a positive integer weight within a certain interval, captured by individual lower and upper limits. 
A digraph with positive integer weights on its (directed) edges is weight-balanced if, for each node, the sum of the weights of the incoming edges equals the sum of the weights of the outgoing edges. 
In this work, we develop a distributed iterative algorithm which solves the integer weight balancing problem in the presence of arbitrary (time-varying and inhomogeneous) time delays that might affect
transmissions at particular links.
We assume that communication between neighboring nodes is bidirectional, but unreliable since it may be affected from bounded or unbounded delays (packet drops), independently between different links and link directions.
We show that, even when communication links are affected from bounded delays or occasional packet drops (but not permanent communication link failures), the proposed distributed algorithm allows the nodes to converge to a set of weight values that solves the integer weight balancing problem, after a finite number of iterations with probability one, as long as the necessary and sufficient circulation conditions on the lower and upper edge weight limits are satisfied. 
Finally, we provide examples to illustrate the operation and performance of the proposed algorithms.

\end{abstract}

\begin{keywords}
\textbf{Distributed algorithms, weight balancing, flow balancing, weight constraints, flow constraints, digraphs, finite
time convergence, time delays, packet drops.} 
\end{keywords}

\IEEEpeerreviewmaketitle

\vspace{-0.3cm}
%
%
%
%

\section{INTRODUCTION}\label{intro}

A distributed system or network consists of a set of components (nodes) that can share information with neighboring components via connection links (edges), forming a generally directed interconnection topology (digraph). 
The digraphs that describe the communication and/or physical topology typically prove to be of vital importance for the effectiveness of distributed strategies in performing various tasks \cite{2003:jadbabaie_coordination, 2007:olfati-saber_consensus, 2008:RenBeard, 2018:BOOK_Hadj}.

A weighted digraph is a digraph in which each edge is associated with a real or integer value called the edge weight. 
Similarly, a flow network (also known as a transportation network) is a digraph where each edge receives a flow that typically cannot exceed a given capacity (or, more generally, has to lie within upper and lower limits). 
A weighted digraph (or flow network) is \textit{weight-balanced} or \textit{balanced} if, for each of its nodes, the sum of the weights of the edges outgoing from the node is equal to the sum of the weights of the edges incoming to the node.

The problem we deal with in this paper can be viewed as the problem of weight/flow balancing under integer weight/flow constraints on each edge of a given digraph \cite{2014:RikosHadj}, or the problem of producing a feasible circulation in a directed graph with upper and lower flow constraints \cite{1986:Papadimitriou}.
Furthermore, it can also be seen as a particular case of the standard network flow problem (see, e.g., \cite{2010:Fulkerson}), where there is a cost associated to the flow on each link, and the objective is to minimize the total cost subject to balancing constraints on the flows.

Weight-balanced digraphs find numerous applications in distributed adaptive control and synchronization in complex networks. 
Examples of applications where balance plays a key role include modeling of flocking behavior \cite{2003:jadbabaie_coordination}, network adaptation strategies based on the use of continuous second order models \cite{2010:DeLellis}, prediction of distribution matrices for telephone traffic \cite{1998:Bertsekas}, distributed adaptive strategies to tune the coupling weights of a network based on local information of node dynamics \cite{2012:YuDeLellis}, and design of cut-balanced networks for consensus seeking systems \cite{2013:Julien_Tsitsiklis}. 
Weight/flow balance is also closely related to weights/flows that form a doubly stochastic matrix \cite{2012:CortesJournal}, which find applications in multicomponent systems (such as sensor networks) where one is interested in distributively averaging measurements at each component. 
Asymptotic consensus to the real average \cite{1984:Tsitsiklis} or the quantized average \cite{2016:ChamieBasar} of the initial values is guaranteed if the weights used in the linear iteration form a doubly stochastic matrix. 
In particular, the distributed average consensus problem has received significant attention from the computer science community \cite{1996:Lynch} and the control community \cite{2004:XiaoBoyd} due to its applicability to diverse areas, including multi-agent systems, distributed estimation and tracking \cite{2008:CarliChiusoSchenatoZampieri}, and distributed optimization \cite{2010:Nedic}.
A review of recent approaches to distributed average consensus (and its applications to various settings) can be found in \cite{2018:BOOK_Hadj}.

Recently, quite a few works have dealt with the problem of balancing a strongly connected digraph with either real or integer weights/flows. 
For example, \cite{2009:Cortes, 2018:BOOK_Hadj, 2014:RikosHadj, 2013:Priolo, RikosHadjDelays_2017} deal with distributed algorithms for weight/flow balancing when the nonnegative weights on each edge are otherwise unconstrained (in terms of the values they admit), \cite{2016:RikosHadj, 2016:HadjAlej} deal with the problem of weight/flow balancing assuming timely and reliable exchange of information between nodes, and \cite{2017:RikosHadj_Allerton} deals with weight/flow balancing when the nonnegative weights on each edge are constrained and admit real values (resulting to asymprotic convergence) in the presence of unreliable communication links. 

In this paper, we investigate the problem of integer weight/flow balancing in a multi-component system under a directed interconnection topology, where the flow/weight on each edge consists of lower and upper constraints (in terms of the values it admits) in the presence of bounded delays or unbounded delays (packet drops) in the communication links. 
We consider a fixed topology (digraph) and we devise a protocol, based on our previous work in \cite{2017:RikosHadj}, where each node updates its state by combining the available (possibly delayed) weight information received by its in-neighbors. 
We establish that the proposed balancing algorithm reaches, after a finite number of steps, a set of weights that form a weight-balanced digraph despite the presence of arbitrary but bounded delays in the communication links. 
When packet drops (i.e., infinite delays) are present over the communication links, we propose a modified version of the algorithm that is shown to converge to  a set of weights that form a balanced graph after a finite number of iterations (with probability one). 
In both cases, we argue that the proposed algorithm reaches a solution as long as such as set of weights exists.

The remainder of this paper is organized as follows. 
In Section~\ref{preliminaries} the notation used throughout the paper is provided, along with background on graph theory and the problem formulation. 
In Section~\ref{CircConditions} we present the conditions for the existence of a set of integer weights (within the interval constraints) that balance a weighted digraph.
In Section~\ref{algorithm_delays} we present the distributed algorithm which achieves integer weight-balancing in the presence of bounded delays after a finite number of iterations. 
In Section~\ref{algorithm_packetDr}, we analyze the case of unbounded delays (packet drops) in the communication links and we present a distributed algorithm which achieves integer weight-balancing after a finite number of iterations with probability one. 
Finally, in Section~\ref{results} we present simulation results and comparisons, and we conclude in Section~\ref{future} with a brief summary and remarks about our future work.

%
%
%
%
\section{NOTATION AND BACKGROUND}\label{preliminaries}

Matrices are denoted by capital letters. The sets of real, integer, natural and nonnegative integer numbers are denoted by $ \mathbb{R}, \mathbb{Z}$, $\mathbb{N}$ and $\mathbb{N}_0$ respectively.

\subsection{Graph-Theoretic Notions}

A distributed system whose components can exchange certain quantities of interest via (possibly directed) links, can conveniently be captured by a digraph (directed graph). 
A digraph of order $n$ ($n \geq 2$), is defined as $\mathcal{G}_d = (\mathcal{V}, \mathcal{E})$, where $\mathcal{V} =  \{v_1, v_2, \dots, v_n\}$ is the set of nodes and $\mathcal{E} \subseteq \mathcal{V} \times \mathcal{V} - \{ (v_j,v_j)$ $|$ $v_j \in \mathcal{V} \}$ is the set of edges. 
A directed edge from node $v_i$ to node $v_j$ is denoted by $(v_j, v_i) \in \mathcal{E}$, and indicates that a nonnegative flow of mass from node $v_i$ to node $v_j$ is possible. We will refer to the digraph $\mathcal{G}_d$ as the {\em topology}.

A digraph is called \textit{strongly connected} if for each pair of vertices $v_j, v_i \in \mathcal{V}$, $v_j \neq v_i$, there exists a directed \textit{path} from $v_i$ to $v_j$, i.e., we can find a sequence of vertices $v_i \equiv v_{l_0},v_{l_1}, \dots, v_{l_t} \equiv v_j$ such that $(v_{l_{\tau+1}},v_{l_{\tau}}) \in \mathcal{E}$ for $ \tau = 0, 1, \dots , t-1$. 
All nodes that can have flows to node $v_j$ directly are said to be in-neighbors of node $v_j$ while the nodes that receive flows from node $v_j$ comprise its out-neighbors.
The in- and out-neighbors of node $v_j$ are nodes in the set $\mathcal{N}_j^- = \{v_i \in \mathcal{V} \; | \; (v_j, v_i) \in \mathcal{E} \}$ and $\mathcal{N}_j^+ = \{v_l \in \mathcal{V} \; | \; (v_l, v_j) \in \mathcal{E} \}$ respectively, where the cardinality of $\mathcal{N}_j^-$ is called the \textit{in-degree}  of $v_j$ (denoted by $\mathcal{D}_j^-$) and the cardinality of $\mathcal{N}_j^+$ is called the \textit{out-degree} of $v_j$ (denoted by $\mathcal{D}_j^+$). 
We let $\mathcal{N}_j = \mathcal{N}^+_j \cup  \mathcal{N}^-_j$ denote the {\em neighbors} of node $v_j$, and $\mathcal{D}_j = \mathcal{D}^+_j + \mathcal{D}^-_j$ denote the {\em total degree} of node $v_j$. 
Also, $ \mathcal{E}^-_j  =  \{ (v_j, v_i) \ | \ v_i \in \mathcal{N}^-_j \} $ ($ \mathcal{E}^+_j  =  \{ (v_l, v_j) \ | \ v_l \in \mathcal{N}^+_j \} $) denotes the incoming (outgoing) edges to (from) node $v_j$. 
Note that $ | \mathcal{E}^-_j \cup \mathcal{E}^+_j | = \mathcal{D}^+_j + \mathcal{D}^-_j = \mathcal{D}_j$, where $\mathcal{D}_j$ is the total degree of node $v_j$. 

We assume that node $v_j$ assigns a ``unique order'' in the set $\{0,1,..., \mathcal{D}_j-1\}$ to each of its outgoing and incoming edges. 
The order of edge $(v_l, v_j)$ (or edge ($v_j$, $v_i$)) is denoted by $P_{lj}$ (or $P_{ji}$) (such that $\{P_{lj} \; | \; v_l \in \mathcal{N}^+_j\} \cup \{P_{ji} \; | \; v_i \in \mathcal{N}^-_j\} = \{0,1,..., \mathcal{D}_j-1\}$) and will be used later on as a way of allowing node $v_j$ to make changes to its outgoing and incoming edge flows in a predetermined order. 
Note that the ``unique order'' is cyclic in the sense that every time a node attempts to change the flows of its incoming/outgoing edges, it continues from the edge it stopped the previous time according to the predetermined order, starting from the beginning if it has changed the values of every incoming and outgoing edge.

We assume that a pair of nodes $v_j$ and $v_i$ that are connected by an edge in the digraph $\mathcal{G}_d$ (i.e., $(v_j, v_i) \in \mathcal{E}$ and/or $(v_i, v_j) \in \mathcal{E}$) can exchange information among themselves (in both directions). 
In other words, the {\em communication topology} is captured by the undirected graph $\mathcal{G}_u = (\mathcal{V}, \mathcal{E}_u)$ that corresponds to the given directed graph $\mathcal{G}_d = (\mathcal{V}, \mathcal{E})$, where $ \mathcal{E}_u = \cup_{(v_j, v_i) \in \mathcal{E}} \{ (v_j, v_i), (v_i, v_j) \} $.


\subsection{Flow/Weight Balancing}

Given a digraph $\mathcal{G}_d = (\mathcal{V}, \mathcal{E})$ we aim to assign positive integer flows $f_{ji} \in \mathbb{N}$ to each edge $(v_j, v_i) \in \mathcal{E}$. 
In this paper, these flows will be restricted to lie in an interval $[l_{ji}, u_{ji}]$, i.e., $0 < l_{ji} \leq f_{ji} \leq u_{ji}$ and $f_{ji} \in \mathbb{N}$, for every $(v_j, v_i) \in \mathcal{E}$. 
We will also use matrix notation and denote (respectively) the integer flow, {\em perceived} integer flow\footnote{The perceived integer flow $f_{ji}^{(p)}$ will be used to denote the flow that node $v_j$ perceives on link $(v_j, v_i)$; due transmission delays or packet drops the flow perceived by node $v_j$ might be different from the actual flow $f_{ji}$ assigned by node $v_i$ (our convention is that the true flow on edge $(v_j,v_i)$ is assigned by node $v_i$).}, lower limit, and upper limit matrices by the $n \times n$ matrices $F = [ f_{ji} ]$, $F_p = [ f^{(p)}_{ji} ]$, $L=[l_{ji}]$, and $U=[u_{ji}]$, where $F(j,i)=f_{ji}$, $F_p(j,i)=f^{(p)}_{ji}$, $L(j,i)=l_{ji}$, $U(j,i)=u_{ji}$, for every $(v_j, v_i) \in \mathcal{E}$ (obviously, $f_{ji} = f^{(p)}_{ji} = l_{ji} = u_{ji} = 0$ when $(v_j,v_i) \notin \mathcal{E}$).

\begin{defn}\label{DEFnodebalance}
Given a digraph $\mathcal{G}_d=(\mathcal{V},\mathcal{E})$ of order $n$ along with an integer flow assignment $F=[f_{ji}] $, the total \textit{in-flow} of node $v_j$ is defined as $f_j^- = \sum_{v_i \in \mathcal{N}_j^-} f_{ji}$, the total \textit{out-flow} of node $v_j$ is defined as $f_j^+ = \sum_{v_l \in \mathcal{N}_j^+} f_{lj}$ and the {\em flow balance} $b_j$ of node $v_j$ is $b_j = f_j^- - f_j^+$.
\end{defn}

\begin{defn}
\label{DEFpercnodebalance}
Given a digraph $\mathcal{G}_d=(\mathcal{V},\mathcal{E})$ of order $n$, along with an integer flow assignment $F=[f_{ji}]$ and a perceived flow assignment $F_p=[f^{(p)}_{ji}]$, the total {\em perceived} in-flow $f_j^{-(p)}$ of node $v_j$ is $f_j^{-(p)} = \sum_{v_i \in \mathcal{N}_j^-} f^{(p)}_{ji}$ while the {\em perceived flow balance} $b^{(p)}_j$ of node $v_j$ is $b^{(p)}_j = f_j^{-(p)} - f_j^+$.
\end{defn}

\begin{defn}
\label{defn:totalim}
Given a digraph $\mathcal{G}_d=(\mathcal{V},\mathcal{E})$ of order $n$, along with an integer flow assignment $F=[f_{ji}]$, the \textit{total imbalance} (or {\em absolute imbalance}) of digraph $\mathcal{G}_d$ is defined as $\varepsilon = \sum_{j=1}^{n} \vert b_j \vert$, while the perceived total imbalance of digraph $\mathcal{G}_d$ is defined as $\varepsilon^{(p)} = \sum_{j=1}^{n} \vert b^{(p)}_j \vert$. The digraph $\mathcal{G}_d$ is called flow-balanced if its \textit{total imbalance} is zero.
\end{defn}

\begin{remark}
Note here that the integer flow $f_{lj}$ on edge $(v_l, v_j) \in \mathcal{E}$ is assigned by node $v_j$. 
Thus, node $v_j$ has access to the true flow $f_{lj}$ of edge $(v_l, v_j)$ while node $v_l$ has access to a {\em perceived} flow $f^{(p)}_{lj}$, which will be equal to $f_{lj}$ if node $v_j$ is able to successfully communicate with node $v_l$. 
This means that node $v_l$ can only calculate its {\em perceived flow balance} $b^{(p)}_l$ at each iteration $k$ and it has no access to the total (or perceived total) imbalance of the digraph $\mathcal{G}_d$.
\end{remark}

\subsection{Modeling Time Delays and Packet Drops}\label{model_delays}

We assume that a transmission from node $v_j$ to node $v_l$ at time step $k$ undergoes an \textit{a priori unknown} delay $\tau^{(j)}_{lj}[k]$ while, we consider both bounded delays and unbounded delays (packet drops). 
For bounded delays, we assume that $\tau^{(j)}_{lj}[k]$ is an integer that satisfies $0 \leq \tau^{(j)}_{lj}[k] \leq \overline{\tau}_{lj} \leq \infty$ where the maximum delay is denoted by $\overline{\tau} = \max_{(v_l,v_j) \in \mathcal{E}}\overline{\tau}_{lj}$. 
In the weight balancing setting we consider that node $v_j$ is in charge of assigning the {\em actual} flow $f_{lj}[k]$ to each link $(v_l,v_j)$, and then transmits to node $v_l$ the amount of change $c^{(j)}_{lj}[k]$ it desires at time step $k$. 
Under the above delay model (which also assumes bidirectional communication), node $v_l$ ($v_j$) receives the change amount $c^{(j)}_{lj}[k]$ ($c^{(l)}_{lj}[k]$), required by node $v_j$ ($v_l$) over the actual (perceived) flow $f_{lj}[k]$ ($f^{(p)}_{lj}[k]$), at time step $k + \tau^{(j)}_{lj}[k]$ ($k + \tau_{lj}^{(l)}[k]$).

To handle the case when a transmission from node $v_j$ to node $v_l$ at time step $k$ undergoes an \textit{a priori unknown unbounded} delay, we assume that each particular edge may drop packets with some non-total probability. 
We assume independence between packet drops at different time steps or different links (or even different directions of the same link), so that, we can model a packet drop via a Bernoulli random variable:
\begin{equation}\label{dropsmodel}
Pr\{ x_k(j,i)=m \} = \left\{ \begin{array}{ll}
         q_{ji}, & \mbox{if $m = 0$,}\\
         1 - q_{ji}, & \mbox{if $m = 1$,}\end{array} \right.
\end{equation}
where $x_k(j,i)=1$ if the transmission from node $v_i$ to node $v_j$ at time step $k$ is successful.

\subsection{Problem formulation}

We are given a strongly connected digraph $\mathcal{G}_d = (\mathcal{V}, \mathcal{E})$, as well as lower and upper limits $l_{ji}$ and $u_{ji}$ ($0 < l_{ji} \leq u_{ji}$, where $l_{ji}, u_{ji} \in \mathbb{R}$) on each each edge $(v_j, v_i) \in \mathcal{E}$. 
Considering that link transmissions undergo arbitrary, bounded (or unbounded) delays, we want to develop a distributed algorithm that allows the nodes to iteratively adjust the integer flows on their edges so that they eventually obtain a set of integer flows $\{ f_{ji} \; | \; (v_j, v_i) \in \mathcal{E} \}$ that satisfy the following:
\begin{enumerate}
\item $ f_{ji} \in \mathbb{N}$ for each edge $(v_j,v_i) \in \mathcal{E}$;
\item $l_{ji} \leq f_{ji} \leq u_{ji}$ for each edge $(v_j,v_i) \in \mathcal{E}$;
\item $f_j^+ = f_j^- = f_j^{-(p)}$ for each $v_j \in \mathcal{V}$.
\end{enumerate}
The distributed algorithm needs to respect the communication constraints imposed by the undirected graph $\mathcal{G}_u$ that corresponds to the given directed graph~$\mathcal{G}_d$. 
Specifically, the {\em communication topology} is captured by the undirected graph $\mathcal{G}_u = (\mathcal{V}, \mathcal{E}_u)$ that corresponds to a given directed graph $\mathcal{G}_d = (\mathcal{V}, \mathcal{E})$, where $\mathcal{E}_u = \cup_{(v_j, v_i) \in \mathcal{E}} \{ (v_j, v_i), (v_i, v_j) \}$.

\begin{remark}
One of the main differences of the work in this paper with the works in \cite{2010:Cortes, 2009:Cortes, 2014:RikosHadj, 2013:Priolo, 2012:Rikos, RikosHadjDelays_2017} is that the algorithm developed in this paper requires a bidirectional communication topology, whereas most of the aforementioned works assume a communication topology that matches the flow (physical) topology. 
We should point out that direct application of these earlier algorithms to the problem that is of interest in this paper will generally fail (because flows are restricted to lie within lower and upper limits). 
Also, note that there are many applications where the physical topology is directed but the communication topology is bidirectional. 
One such example is the traffic network that was mentioned earlier; it is represented by a digraph, in which unidirectional or bidirectional edges (possibly capacity constrained) capture, respectively, one-way or two-way streets, and where nodes capture intersections. Traffic lights typically sit at these intersections and aim to control traffic flow; even though traffic lights may be constrained in terms of how they divert flow (depending on the constraints of the traffic network), communication between neighboring traffic lights can be bidirectional. In other words, there are two graphs: the directed (physical) graph representing the actual traffic flow over streets/edges and the likely undirected (cyber or communication) graph representing the communication capability between nodes in the graph. In applications like the traffic network mentioned above, the algorithms proposed here are directly applicable. More generally, in many applications, the communication topology may not necessarily match the physical one; in our future work, we plan to enhance the algorithm proposed here to allow for different communication topologies (including the one that matches the physical topology).
\end{remark}

\section{NECESSARY AND SUFFICIENT CONDITIONS: INTEGER CIRCULATION CONDITIONS}\label{CircConditions}

When edge weights are restricted to be integers, the theorem below (a variation of the well known circulation conditions) characterizes the necessary and sufficient conditions (e.g., see Theorem $3.1$ in \cite{2010:Fulkerson}) for the existence of a set of integer flows that satisfy interval constraints and balance constraints.

\begin{theorem}\label{IntTheoremCirc}
Consider a strongly connected digraph $\mathcal{G}_d=(\mathcal{V}, \mathcal{E})$, with lower and upper bounds $l_{ji}$ and $u_{ji}$ (where $0 < l_{ji} \leq u_{ji}$) on each edge $(v_j, v_i) \in \mathcal{E}$. 
The \textit{necessary} and \textit{sufficient} conditions for the existence of a set of integer flows $\{ f_{ji} \in \mathbb{N} \ | \; (v_j, v_i) \in \mathcal{E} \}$ that satisfy
\begin{enumerate}
\item {\em Interval constraints:} $0 < l_{ji} \leq f_{ji} \leq u_{ji}$ for each edge $(v_j,v_i) \in \mathcal{E}$, and
\item {\em Balance constraints:} $f_j^+ = f_j^-$ for every $v_j \in \mathcal{V}$,
\end{enumerate}
are the following: 
\begin{enumerate}
\item[(i)] for every $(v_j,v_i) \in \mathcal{E}$, we have $\lceil l_{ji} \rceil \leq \lfloor u_{ji} \rfloor,$ and
\item[(ii)] for each $\mathcal{S}$, $\mathcal{S} \subset \mathcal{V}$, we have
\begin{equation}
\label{EQnsconditions}
\sum_{(v_j, v_i) \in \mathcal{E}^-_\mathcal{S}} \lceil l_{ji} \rceil \leq \sum_{(v_l, v_j) \in \mathcal{E}^+_\mathcal{S}} \lfloor u_{lj} \rfloor \; ,
\end{equation}
\end{enumerate}
where 
\begin{eqnarray}
\mathcal{E}^-_\mathcal{S} & = & \{ (v_j, v_i) \in \mathcal{E} \; | \; v_j \in \mathcal{S}, \; v_i \in \mathcal{V}-\mathcal{S} \} \; , \label{REALEQinS}\\
\mathcal{E}^+_\mathcal{S} & = & \{ (v_l, v_j) \in \mathcal{E} \; | \; v_j \in \mathcal{S}, \; v_l \in \mathcal{V}-\mathcal{S} \} \label{REALEQoutS}\; .
\end{eqnarray}
\end{theorem}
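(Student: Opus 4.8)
\noindent\emph{Proof idea.} The plan is to establish the two directions separately: necessity by aggregating the balance equations over an arbitrary cut, and sufficiency by first rounding the bounds to integers and then invoking the classical integral feasible-circulation theorem. For necessity, suppose an integer flow $\{f_{ji}\}$ satisfying the interval and balance constraints exists. Since $f_{ji}\in\mathbb{N}$ and $l_{ji}\leq f_{ji}\leq u_{ji}$, we immediately get $\lceil l_{ji}\rceil\leq f_{ji}\leq\lfloor u_{ji}\rfloor$, which forces condition (i). For (ii), fix a nonempty proper subset $\mathcal{S}\subset\mathcal{V}$ and add the balance constraints $f_j^- = f_j^+$ over all $v_j\in\mathcal{S}$. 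Every edge with both endpoints in $\mathcal{S}$ appears exactly once in an in-flow term and exactly once in an out-flow term, hence cancels, leaving $\sum_{(v_j,v_i)\in\mathcal{E}^-_\mathcal{S}} f_{ji} = \sum_{(v_l,v_j)\in\mathcal{E}^+_\mathcal{S}} f_{lj}$, i.e., the flow entering $\mathcal{S}$ equals the flow leaving $\mathcal{S}$. Bounding the left-hand sum below by $\sum\lceil l_{ji}\rceil$ and the right-hand sum above by $\sum\lfloor u_{lj}\rfloor$ then yields \eqref{EQnsconditions}.

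\emph{Sufficiency.} Assume (i) and (ii). Replace every bound by its integer rounding, $\tilde{l}_{ji}:=\lceil l_{ji}\rceil$ and $\tilde{u}_{ji}:=\lfloor u_{ji}\rfloor$; by (i) these satisfy $1\leq\tilde{l}_{ji}\leq\tilde{u}_{ji}$, and since \eqref{EQnsconditions} is already written in terms of ceilings and floors, condition (ii) is unchanged. It therefore suffices to produce an integer circulation $f$ with $\tilde{l}_{ji}\leq f_{ji}\leq\tilde{u}_{ji}$, since any such $f$ automatically satisfies the interval constraints and $f_{ji}\geq 1$. This is precisely the integral feasible-circulation problem with integer lower and upper bounds, for which \eqref{EQnsconditions} is exactly the cut (Hoffman) condition, so existence of an integral solution follows from the classical result cited above (Theorem~3.1 in \cite{2010:Fulkerson}). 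If a self-contained argument is preferred, I would start from the mandatory assignment $f^{0}_{ji}=\tilde{l}_{ji}$, record the node imbalances $e_j=\sum_{v_i\in\mathcal{N}^-_j}\tilde{l}_{ji}-\sum_{v_l\in\mathcal{N}^+_j}\tilde{l}_{lj}$ (which sum to zero over $\mathcal{V}$), and build an auxiliary network on $\mathcal{V}\cup\{s,t\}$ in which each original edge $(v_j,v_i)$ has residual capacity $\tilde{u}_{ji}-\tilde{l}_{ji}\geq 0$, each node with $e_j>0$ has an incoming edge $s\to v_j$ of capacity $e_j$, and each node with $e_j<0$ has an outgoing edge $v_j\to t$ of capacity $-e_j$. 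A balanced integer flow in $[\tilde{l},\tilde{u}]$ exists iff this network admits an $s$--$t$ flow saturating every edge out of $s$, which by the max-flow/min-cut theorem is equivalent to every $s$--$t$ cut having capacity at least $\sum_{e_j>0}e_j$; unwinding the definitions \eqref{REALEQinS} and \eqref{REALEQoutS} shows this is exactly \eqref{EQnsconditions}. Integrality of all capacities makes the maximizing flow integral (Ford--Fulkerson), which yields the required integral balanced $f$.

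\emph{Expected main obstacle.} The delicate part is the sufficiency direction — concretely, the bookkeeping that identifies the minimum $s$--$t$ cut of the auxiliary network (equivalently, Hoffman's cut condition) with the set condition \eqref{EQnsconditions}. Here one must keep the orientation convention straight: recall that $(v_j,v_i)\in\mathcal{E}$ denotes flow \emph{from} $v_i$ \emph{to} $v_j$, so the edges \emph{entering} $\mathcal{S}$ are the members of $\mathcal{E}^-_\mathcal{S}$ (carrying the lower bounds $\lceil l_{ji}\rceil$) and the edges \emph{leaving} $\mathcal{S}$ are the members of $\mathcal{E}^+_\mathcal{S}$ (carrying the upper bounds $\lfloor u_{lj}\rfloor$). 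Note also that strong connectivity of $\mathcal{G}_d$ is not actually required for this theorem; it is a standing assumption that will be exploited by the distributed algorithms developed in the subsequent sections.
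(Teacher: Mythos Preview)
Your proposal is correct and aligns with the paper's approach: the paper does not give a standalone proof but simply states the theorem as a variation of the classical circulation conditions (Theorem~3.1 in \cite{2010:Fulkerson}), with the accompanying remark observing that for integer flows the effective interval is $[\lceil l_{ji}\rceil,\lfloor u_{ji}\rfloor]$ and must be nonempty. Your optional auxiliary-network/max-flow argument for sufficiency goes beyond what the paper provides, but it is a standard and correct way to unpack the cited result.
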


\begin{remark}
Note that Theorem~\ref{IntTheoremCirc} effectively requires $\mathcal{G}_d$ to be strongly connected or a pure collection of strongly connected sub-digraphs.
The necessity of the conditions described in Theorem~\ref{IntTheoremCirc} follows from the conditions in \cite{2010:Fulkerson}: when flows are restricted to be integers, the effective interval of $f_{ji}$ is the interval $[\lceil l_{ji} \rceil, \lfloor u_{ji} \rfloor]$ and clearly has to be non-empty for each $(v_j, v_i) \in {\mathcal V}$ (condition (i) above).
\end{remark}

%
%
%
%
\section{INTEGER FLOW BALANCING ALGORITHM WITH TIME DELAYS}
\label{algorithm_delays}

In this section we provide an overview of the distributed flow algorithm operation; the formal description of the algorithm is provided in Algorithm~\ref{alg1_delays}. 
The algorithm is iterative and operates by having, at each iteration, nodes with \textit{positive} \textit{perceived} flow balance attempt to change the integer flows on both their incoming and/or outgoing edges so that they become flow balanced. 
We first describe the distributed iterative algorithm operations and we establish that, if the necessary and sufficient integer circulation conditions in Theorem~\ref{IntTheoremCirc} are satisfied, the algorithm completes after a finite number of iterations.

\textbf{Initialization.} At initialization, each node is aware of the feasible flow interval on each of its incoming and outgoing edges, i.e., node $v_j$ is aware of $l_{ji}, u_{ji}$ for each $v_i \in \mathcal{N}^-_j$ and $l_{lj}, u_{lj}$ for each $v_l \in \mathcal{N}^+_j$. 
Furthermore, the flows are initialized at the ceiling of the lower bound of the feasible interval, i.e., $f_{ji}[0] = \lceil l_{ji} \rceil$.
This initialization is always feasible but not critical and could be any integer value in the feasible flow interval $[l_{ji}, u_{ji}]$ (according to Theorem~\ref{IntTheoremCirc} an integer always exists in the interval $[l_{ji}, u_{ji}]$).
Also, each node $v_j$ chooses a unique order $P_{lj}^{(j)}$ and $P_{ji}^{(j)}$ for its outgoing links $(v_l,v_j)$ and incoming links $(v_j,v_i)$ respectively, such that $\{P_{lj}^{(j)} \; | \; v_l \in \mathcal{N}^+_j\} \cup \{P_{ji}^{(j)} \; | \; v_i \in \mathcal{N}^-_j\} = \{0,1,..., \mathcal{D}_j-1\}$. 

\textbf{Iteration.} At each iteration $k \geq 0$, node $v_j$ is aware of the {\em perceived} integer flows on its incoming edges $\{ f^{(p)}_{ji}[k] \; | \: v_i \in \mathcal{N}^-_j \}$ and the (actual) flows on its outgoing edges $\{ f_{lj}[k] \; | \: v_l \in \mathcal{N}^+_j \}$, which allows it to calculate its {\em perceived} flow balance $b^{(p)}_j[k]$ according to Definition~\ref{DEFpercnodebalance}.

\noindent
{\em A. Selecting Desirable Flows.} Each node $v_j$ with positive {\em perceived} flow balance $b^{(p)}_j[k] > 0$ attempts to subtract $1$ (one unit of flow) from the flows on its incoming edges $\{ f_{ji}[k] \; | \; v_i \in \mathcal{N}^-_j \}$ and add $1$ (one unit of flow) to the flows of its outgoing edges $\{ f_{lj}[k] \; | \; v_l \in \mathcal{N}^+_j \}$, one at a time by following the predetermined order (chosen at initialization) in a round-robin fashion, until its perceived flow balance $b^{(p)}_j[k+1]$ becomes zero (at least if no other changes are inflicted on the flows).  
If an outgoing (incoming) edge has reached its max (min) value (according to the feasible interval on that particular edge), then its flow does not change and node $v_j$ proceeds in changing the flow of the ensuing edge, according to the predetermined order. 
Note here that no attempt to change flows is made if node $v_j$ has negative or zero perceived flow balance. 
The next time node $v_j$ needs to change the flows of its incoming/outgoing edges, it will continue from the edge it stopped the previous time and cycle through the edge weights in a round-robin fashion according to the ordering chosen at initialization. 
The desired flow change by node $v_j$ on edge $(v_j,v_i) \in \mathcal{E}$ at iteration $k$ will be denoted by $c_{ji}^{(j)}[k]$; similarly, the desired flow change by node $v_j$ on edge $(v_l,v_j) \in \mathcal{E}$ at iteration $k$ will be denoted by $c_{lj}^{(j)}[k]$.

\noindent
{\em B. Exchanging Desirable Flows.} 
Once each node $v_j$ with positive {\em perceived} flow balance calculates the desirable flow change for each incoming $\{ c^{(j)}_{ji}[k] \; | \; v_i \in \mathcal{N}^-_j \}$ and outgoing $\{ c^{(j)}_{lj}[k] \; | \; v_l \in \mathcal{N}^+_j \}$ flow, it does the following steps in sequence: 

\noindent
1) It transmits the desirable flow change $c^{(j)}_{ji}[k]$ ($c^{(j)}_{lj}[k]$) to each in- (out-) neighbor $v_i$ ($v_l$).

\noindent
2) It receives the (possibly delayed) desired flow changes $\overline{c}^{(i)}_{ji}[k]$ ($\overline{c}^{(l)}_{lj}[k]$) from each in- (out-) neighbor $v_i$ ($v_l$).
From node $v_j$'s perspective, the delayed flow change for link $(v_l,v_j)$, $\forall v_l \in \mathcal{N}^+_j$, at time step $k$ is given by 
\begin{align} \label{delay_equation}
\overline{c}^{(l)}_{lj}[k]= \sum_{k_0 = k - \overline{\tau}}^{k} c^{(l)}_{lj}[k_0], \text{ for which } k_0 + \tau^{(l)}_{lj}[k_0] = k ,  
\end{align}
i.e., $\overline{c}^{(l)}_{lj}[k]$ is the sum of flow changes $c^{(l)}_{lj}$ that were sent from $v_l$ and are seen by node $v_j$ by time step $k$.
If no flow change is received due to time delays, then node $v_j$ assumes that $\overline{c}^{(i)}_{ji}[k] = 0$ ($\overline{c}^{(l)}_{lj}[k] = 0$) for the corresponding incoming (outgoing) edge $(v_j, v_i)$ ($(v_l, v_j)$).

\noindent
3) It calculates its new outgoing ({\em perceived} incoming) flows $f_{lj}[k+1] = f_{lj}[k] + c_{lj}^{(j)}[k] + \overline{c}_{lj}^{(l)}[k] $ ($f_{ji}^{(p)}[k+1] = f_{ji}^{(p)}[k] + c_{ji}^{(j)}[k] + \overline{c}_{ji}^{(i)}[k]$).
Then, the new outgoing ({\em perceived} incoming) flows are adjusted so that the new flow is projected onto the feasible interval $[l_{lj}, u_{lj}]$ ($[l_{ji}, u_{ji}]$) of the corresponding edge.
This (along with all the parameters involved) can be seen in Figure~\ref{nodes_exch_delays}.

\begin{remark}
Since the flow $f_{ji}$ on each edge $(v_j, v_i) \in \mathcal{E}$ affects positively the flow balance $b_j[k]$ of node $v_j$ and negatively the flow balance $b_i[k]$ of node $v_i$, we need to take into account the possibility that both nodes desire a change on the flow simultaneously. 
Thus, the proposed algorithm attempts to coordinate the flow change. 
The challenge, however, is the fact that time delays may occur during transmissions (in either direction) while the nodes are trying to agree on a flow value.
\end{remark}

\begin{figure}[h]
\begin{center}
\includegraphics[width=0.8\columnwidth]{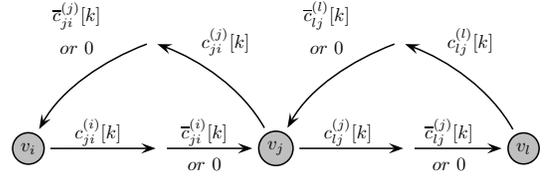}
\caption{Digraph where nodes exchange their desirable flows in the presence of time delays.}
\label{nodes_exch_delays}
\end{center}
\end{figure}


\begin{algorithm}
\caption{Distributed Flow Balancing Algorithm}
\textbf{Input} \\ 1) A strongly connected digraph $\mathcal{G}_d = (\mathcal{V}, \mathcal{E})$ with $n=|\mathcal{V}|$ nodes and $m=|\mathcal{E}|$ edges.\\ 2) $l_{ji},u_{ji}$ for every $(v_j, v_i) \in \mathcal{E}$, such that the circulation conditions in Theorem~\ref{IntTheoremCirc} are satisfied. \\
\textbf{Initialization} \\ Set $k=0$; each node $v_j \in \mathcal{V}$ does:
\\1) It sets the flows on its {\em perceived} incoming and outgoing edge flows as 
$$
f^{(p)}_{ji}[0] = \lceil l_{ji} \rceil, \ \forall  v_i \in \mathcal{N}_j^-,
$$
$$
f_{lj}[0] = \lceil l_{lj} \rceil, \ \forall  v_l \in \mathcal{N}_j^+.
$$
2) It assigns a unique order to its outgoing and incoming edges as $P_{lj}^{(j)}$, for $v_l \in \mathcal{N}_j^+$ or $P_{ji}^{(j)}$, for $v_i \in \mathcal{N}_j^-$ (such that $\{P_{lj}^{(j)} \ | \ v_l \in \mathcal{N}_j^+\} \cup \{P_{ji}^{(j)} \ | \ v_i \in \mathcal{N}_j^-\} = \{0,1,...,\mathcal{D}_j -1 \}$).
\\
\textbf{Iteration} \\ For $k=0,1,2,\dots$, each node $v_j \in \mathcal{V}$ does the following:
\\ 1) It computes its \textit{perceived flow balance} as in Definition~\ref{DEFpercnodebalance}
$$
b^{(p)}_j[k] = \sum_{v_i \in \mathcal{N}_j^-} f_{ji}^{(p)}[k] - \sum_{v_l \in \mathcal{N}_j^+} f_{lj}[k]. 
$$
2) If $b^{(p)}_j[k] > 0$, it increases (decreases) by $1$ the integer flows $f_{lj}[k]$ ($f_{ji}^{(p)}[k]$) of its outgoing (incoming) edges $v_l \in \mathcal{N}_j^+$ ($v_i \in \mathcal{N}_j^-$) one at a time, following the predetermined order $P_{lj}^{(j)}$ ($P_{ji}^{(j)}$) until its flow balance becomes zero (if an edge
has reached its maximum (minimum) value and it cannot be increased (decreased) further, its flow does not change and node $v_j$ proceeds in changing the next one according to the predetermined order). 
Then, it stores the desired change amount for each outgoing edge as $c_{lj}^{(j)}[k]$ and each incoming edge as $c_{ji}^{(j)}[k]$.
\\ 3) If  $b^{(p)}_j[k] > 0$, it transmits the desired flow change $c^{(j)}_{lj}[k]$ ($c^{(j)}_{ji}[k]$)  on each outgoing (incoming) edge.
\\ 4) It receives the (possibly delayed) desired flow change $\overline{c}^{(l)}_{lj}[k]$ ($\overline{c}^{(i)}_{ji}[k]$) from each outgoing (incoming) edge. 
[If no flow change is received due to time delays it assumes $\overline{c}^{(l)}_{lj}[k] = 0$ ($\overline{c}^{(i)}_{ji}[k] = 0$) for the corresponding outgoing (incoming) edge.]
\\ 5) It sets its outgoing flows to be 
$$
f_{lj}[k+1] = f_{lj}[k] + c_{lj}^{(j)}[k] + \overline{c}_{lj}^{(l)}[k] ,
$$
and its new {\em perceived} incoming flows to be 
$$
f_{ji}^{(p)}[k+1] = f_{ji}^{(p)}[k] + c_{ji}^{(j)}[k] + \overline{c}_{ji}^{(i)}[k] .
$$
6) It adjusts the new outgoing flows according to the corresponding upper and lower weight constraints as
$$
f_{lj}[k+1] =\max \{ l_{lj}, \min\{u_{lj}, f_{lj}[k+1]\} \} ,
$$ 
and its new {\em perceived} incoming flows according to the corresponding upper and lower weight constraints as
$$
f_{ji}^{(p)}[k+1] =\max \{ l_{ji}, \min\{u_{ji}, f_{ji}^{(p)}[k+1]\} \} .
$$
7) It repeats (increases $k$ to $k+1$ and goes back to Step~1).
\label{alg1_delays}
\end{algorithm}

\begin{remark}\label{perceivedSmaller_del}
It is important to note here that the total {\em perceived} in-flow $f_j^{-(p)}$ of node $v_j$ might be affected from possible time delays at Step~4 of Algorithm~\ref{alg1_delays}. 
Specifically, if transmissions are are such that node $v_j$ receives no information about the change desired by its in-neighbor $v_i$ on the flow $f_{ji}$, then $v_j$ sets $f_{ji}^{(p)}[k+1] = f_{ji}^{(p)}[k] + c_{ji}^{(j)}[k]$ where $c_{ji}^{(j)}[k] < 0$.
Since nodes only attempt to make changes on the flows if their perceived balance is positive, node $v_i$ will only attempt to increase the flow $f_{ji}[k]$ of edge $(v_j, v_i)$,
which means that during the execution of Algorithm~\ref{alg1_delays} we have $f_{ji}^{(p)}[k] \leq f_{ji}[k]$ for each edge $(v_j,v_i) \in \mathcal{E}$, at each time step $k$. 
For this reason, we also have that the perceived balance $b_j^{(p)}[k]$ of node $v_j$ at iteration $k$ is always smaller or equal to its actual balance $b_j[k]$ (i.e., $b_{j}^{(p)}[k] \leq b_{j}[k]$). 
\end{remark}

\begin{remark}
According to the circulation conditions in Theorem~\ref{CircConditions}, each node $v_j \in \mathcal{V}$ with positive {\em perceived} flow balance at iteration $k$ ($b^{(p)}_j[k] > 0$) will always be able to calculate a flow assignment for its incoming and outgoing edge flows so that its {\em perceived} flow balance becomes zero (at least if no other changes are inflicted on the flows of its incoming or outgoing links). 
This can easily be seen by taking the set $\mathcal{S}$ to be $\{ v_j \}$, and realizing that the circulation conditions allow a flow assignment that is balanced. 
This means that the selection of desirable flows in Algorithm~\ref{alg1_delays} is always {\em feasible}.
\end{remark}

%

%
%
%
%
\vspace{-0.6cm}
\subsection{Proof of Algorithm Completion}\label{convergence_delays}

In this section, we show that, as long as the circulation conditions in Theorem~\ref{IntTheoremCirc} hold, then the total imbalance $\varepsilon[k]$ in Definition~\ref{defn:totalim} goes to zero after a finite number of iterations of Algorithm~\ref{alg1_delays}. 
This implies that the flow balance $b_j[k]$ for each node $v_j \in \mathcal{V}$ goes to zero after a finite number of iterations.  
We will argue that the perceived flow balance $b_j^{(p)}$ also goes to zero and that the flows and perceived flows on each edge $(v_j, v_i) \in \mathcal{E}$ stabilize to the same  integer value $f_{ji}^*$ (where $f_{ji}^* \in \mathbb{N}_0$) within the given lower and upper limits, i.e., $0 < l_{ji} \leq f^*_{ji} \leq u_{ji}$ for all $(v_j, v_i) \in \mathcal{E}$. 
As in \cite{2016:RikosHadj} we assume that $l_{ji} \geq 1$ for each edge $(v_j, v_i) \in \mathcal{G}$.

\noindent
We begin by establishing some basic propositions, which rely on the strong connectivity of the network. Complete proofs of Proposition 1 can be found in \cite{2016:HadjAlej}.

\begin{prop}
\label{PROP1_del}
Consider the problem formulation described in Section~\ref{preliminaries}. 
At each iteration $k$ during the execution of Algorithm~\ref{alg1_delays}, it holds that
\begin{enumerate}
\item For any subset of nodes $\mathcal{S} \subset \mathcal{V}$, let $\mathcal{E}^-_\mathcal{S}$ and $\mathcal{E}^+_\mathcal{S}$ be defined by (\ref{REALEQinS}) and (\ref{REALEQoutS}) respectively. Then,
$$
\sum_{v_j \in \mathcal{S}} b_j[k] = \sum_{(v_j, v_i) \in \mathcal{E}^-_\mathcal{S}} f_{ji}[k] - \sum_{(v_l, v_j) \in \mathcal{E}^+_\mathcal{S}} f_{lj}[k] \; ;
$$
\item $\sum_{j=1}^n b_j[k] = 0$;
\item $\varepsilon[k] = 2 \sum_{v_j \in \mathcal{V}^-[k]} \vert b_j[k] \vert$ where $\mathcal{V}^-[k] = \{ v_j \in \mathcal{V} \; | \; b_j[k] < 0 \}$.
\end{enumerate}
\end{prop}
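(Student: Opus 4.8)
The plan is to prove the three parts of Proposition~\ref{PROP1_del} in order, since each is essentially a bookkeeping identity for the flow balances, and the last part will follow by combining the first two with a sign argument.

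First, for part~1, I would start from the definition $b_j[k] = f_j^-[k] - f_j^+[k] = \sum_{v_i \in \mathcal{N}_j^-} f_{ji}[k] - \sum_{v_l \in \mathcal{N}_j^+} f_{lj}[k]$ and sum over $v_j \in \mathcal{S}$. The key observation is that every edge $(v_j,v_i) \in \mathcal{E}$ contributes $+f_{ji}[k]$ to $b_j[k]$ and $-f_{ji}[k]$ to $b_i[k]$. Hence, when we sum the balances over $\mathcal{S}$, an edge with \emph{both} endpoints in $\mathcal{S}$ contributes $+f_{ji}[k] - f_{ji}[k] = 0$ and cancels; an edge with its head $v_j$ in $\mathcal{S}$ and tail $v_i$ in $\mathcal{V}-\mathcal{S}$ contributes only $+f_{ji}[k]$ (this is exactly the set $\mathcal{E}^-_\mathcal{S}$ of (\ref{REALEQinS})); and an edge with its tail $v_j$ in $\mathcal{S}$ and head $v_l$ in $\mathcal{V}-\mathcal{S}$ contributes only $-f_{lj}[k]$ (this is $\mathcal{E}^+_\mathcal{S}$ of (\ref{REALEQoutS})). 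Collecting the surviving terms gives precisely the claimed identity. Part~2 is then the special case $\mathcal{S} = \mathcal{V}$: there are no edges leaving $\mathcal{V}$, so $\mathcal{E}^-_\mathcal{V} = \mathcal{E}^+_\mathcal{V} = \emptyset$ and the right-hand side is $0$; alternatively, one argues directly that in $\sum_{j=1}^n b_j[k]$ every edge flow $f_{ji}[k]$ appears once with a plus sign (at node $v_j$) and once with a minus sign (at node $v_i$), so the total telescopes to zero.

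For part~3, I would write $\varepsilon[k] = \sum_{j=1}^n |b_j[k]| = \sum_{v_j \in \mathcal{V}^-[k]} |b_j[k]| + \sum_{v_j \in \mathcal{V}^+[k]} |b_j[k]|$, where $\mathcal{V}^+[k] = \{v_j : b_j[k] > 0\}$ and nodes with $b_j[k]=0$ are irrelevant. Since $|b_j[k]| = -b_j[k]$ on $\mathcal{V}^-[k]$ and $|b_j[k]| = b_j[k]$ on $\mathcal{V}^+[k]$, part~2 gives $\sum_{v_j \in \mathcal{V}^+[k]} b_j[k] = -\sum_{v_j \in \mathcal{V}^-[k]} b_j[k] = \sum_{v_j \in \mathcal{V}^-[k]} |b_j[k]|$, so the two sums are equal and $\varepsilon[k] = 2\sum_{v_j \in \mathcal{V}^-[k]} |b_j[k]|$.

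None of the three parts presents a genuine obstacle; they are conservation-of-flow identities and the whole difficulty is purely notational — making sure the two edge-subset definitions (\ref{REALEQinS})--(\ref{REALEQoutS}) are invoked with the correct orientation (head versus tail in $\mathcal{S}$), and keeping track of which endpoint of an edge contributes the $+$ and which the $-$. The mildly delicate point is that these identities concern the \emph{actual} flow balances $b_j[k]$ and not the perceived ones $b_j^{(p)}[k]$; the cancellation in parts~1 and 2 works precisely because each actual edge flow $f_{ji}[k]$ is a single well-defined quantity appearing in exactly two nodal balances, which would fail for the perceived flows since $f_{ji}^{(p)}[k]$ need not equal $f_{ji}[k]$. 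I would state this caveat explicitly so the reader sees why the proposition is phrased in terms of $b_j[k]$. Since the paper notes that complete proofs of Proposition~1 appear in \cite{2016:HadjAlej}, I expect the write-up to be brief, essentially the three observations above.
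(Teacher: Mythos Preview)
Your proposal is correct; the three parts are indeed straightforward conservation-of-flow identities, and your cancellation argument for part~1, the specialization (or direct telescoping) for part~2, and the sign-splitting for part~3 are exactly the standard arguments. The paper itself does not give an inline proof of this proposition but simply refers the reader to \cite{2016:HadjAlej}, so there is nothing further to compare.
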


\begin{prop}
\label{PROP2_del}
Consider the problem formulation described in Section~\ref{preliminaries}. 
Let $\mathcal{V}^-[k] \subset \mathcal{V}$ be the set of nodes with negative flow balance at iteration $k$, i.e., $\mathcal{V}^-[k] = \{ v_j \in \mathcal{V} \; | \; b_j[k] < 0 \}$.  
During the execution of Algorithm~\ref{alg1_delays}, we have that
$$
\mathcal{V}^-[k+1] \subseteq \mathcal{V}^-[k].
$$
\end{prop}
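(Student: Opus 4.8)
\textbf{Proof proposal for Proposition~\ref{PROP2_del}.}

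The plan is to show that no node can acquire a negative flow balance at iteration $k+1$ unless it already had one at iteration $k$; equivalently, any node $v_j$ with $b_j[k] \geq 0$ satisfies $b_j[k+1] \geq 0$. The key observation driving everything is the asymmetry in how the algorithm acts: only nodes with \emph{positive perceived} balance attempt flow changes, and when they do, they \emph{decrease} incoming flows and \emph{increase} outgoing flows. Since $b_j^{(p)}[k] \leq b_j[k]$ (Remark~\ref{perceivedSmaller_del}), a node with $b_j[k] \leq 0$ certainly has $b_j^{(p)}[k] \leq 0$ and therefore initiates no changes of its own at iteration $k$.

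First I would fix a node $v_j$ with $b_j[k] \geq 0$ and decompose the change $b_j[k+1] - b_j[k]$ into contributions from its incoming and outgoing edges. For each incoming edge $(v_j,v_i)$, the true flow $f_{ji}$ can only be changed by node $v_i$ (which owns it) and, since $v_i$ only ever adds units to it, $f_{ji}[k+1] \geq f_{ji}[k]$; any change by $v_j$ itself on this edge is recorded only in the perceived flow $f_{ji}^{(p)}$, not in $f_{ji}$. Dually, for each outgoing edge $(v_l,v_j)$, node $v_j$ owns $f_{lj}$ and may increase it (if $b_j^{(p)}[k]>0$), while $v_l$ never changes it. So the true in-flow $f_j^-$ is non-decreasing and... wait — this direction needs care: $v_j$'s own action (if it had positive perceived balance) would \emph{decrease} its perceived in-flow and \emph{increase} its out-flow, which pushes $b_j$ down. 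The clean way around this is to argue separately: if $b_j^{(p)}[k] \leq 0$ then $v_j$ makes no changes, so $b_j[k+1]$ only moves because in-neighbors increased some $f_{ji}$ (raising $b_j$) and out-neighbors... but out-neighbors never touch $f_{lj}$ either. Hence $b_j[k+1] \geq b_j[k] \geq 0$. If instead $b_j^{(p)}[k] > 0$, then by construction $v_j$ selected its desired changes precisely so that its \emph{perceived} balance would become zero (the feasibility remark), and since $b_j^{(p)}[k] \leq b_j[k]$, the self-inflicted decrease in balance is at most $b_j[k]$; combined with the non-negative contributions from in-neighbors' increases, one gets $b_j[k+1] \geq 0$.

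The main obstacle — and the step I would spend the most care on — is precisely this bookkeeping for a node with \emph{positive} perceived balance, because there the node both changes its own edges (driving balance down by at most $b_j^{(p)}[k]$) and simultaneously receives the reciprocal changes its neighbors desired on those same edges, plus possibly delayed changes from earlier iterations arriving now. I would handle this by writing $b_j[k+1] = b_j[k] + \Delta^{\text{self}}_j + \Delta^{\text{nbr}}_j$, showing $\Delta^{\text{self}}_j \geq -b_j^{(p)}[k] \geq -b_j[k]$ (the node's own requested net change cannot exceed what it needs to zero out its perceived balance, even after projection onto the feasible intervals, since projection of an incoming flow \emph{upward} or an outgoing flow \emph{downward} only helps), and $\Delta^{\text{nbr}}_j \geq 0$ (neighbors with positive perceived balance send either a $-1$ request on an edge $(v_i,v_j)$ \emph{outgoing from} $v_j$ — which raises $v_j$'s... no: that edge is incoming to $v_i$, outgoing from $v_j$, and $v_i$ requesting $-1$ on it decreases $f$, lowering $v_j$'s out-flow, hence raising $b_j$ — or a $+1$ request on an edge $(v_j,v_l)$ incoming to $v_j$, raising $v_j$'s in-flow, again raising $b_j$). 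I must also confirm this reasoning is unaffected by delays: a delayed change is just a change that would have been benign at its original iteration and remains sign-consistent when it arrives, and unreceived changes are treated as $0$, which only makes the neighbor contribution smaller in magnitude but never flips its sign. Assembling these, $b_j[k+1] \geq b_j[k] - b_j[k] = 0$, which is the claim, and then $\mathcal{V}^-[k+1] \subseteq \mathcal{V}^-[k]$ follows by contraposition.
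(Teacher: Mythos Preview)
Your decomposition $b_j[k+1] = b_j[k] + \Delta^{\text{self}}_j + \Delta^{\text{nbr}}_j$ is the right idea, and the inequality $\Delta^{\text{nbr}}_j \geq 0$ is correct. However, the bound $\Delta^{\text{self}}_j \geq -b_j^{(p)}[k]$ fails for the \emph{actual} balance, and this is where the argument breaks. The reason is a timing asymmetry built into Algorithm~\ref{alg1_delays}: node $v_j$'s increase $c_{lj}^{(j)}[k]$ on an outgoing edge is applied to $f_{lj}$ immediately by $v_j$, but its decrease request $c_{ji}^{(j)}[k]$ on an incoming edge is applied to the \emph{actual} flow $f_{ji}$ only when it is \emph{received} by $v_i$, i.e., at time $k+\tau^{(j)}_{ji}[k]$. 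Hence the self-contribution to $b_j[k+1]-b_j[k]$ is $\sum_i \overline{c}_{ji}^{(j)}[k]-\sum_l c_{lj}^{(j)}[k]$, where $\overline{c}_{ji}^{(j)}[k]$ collects \emph{past} requests of $v_j$ arriving at $v_i$ at time $k$; several earlier requests may arrive together, so this need not be bounded below by $-b_j^{(p)}[k]$. The same issue invalidates your treatment of the case $b_j^{(p)}[k]\le 0$: even if $v_j$ issues no new changes at step $k$, a delayed $c_{ji}^{(j)}[k_0]$ from some earlier $k_0$ with $b_j^{(p)}[k_0]>0$ may reach $v_i$ now and decrease $f_{ji}$, so ``$v_j$ makes no changes $\Rightarrow b_j$ can only go up'' is not justified.

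The paper sidesteps this by tracking the \emph{perceived} balance. Both quantities entering $b_j^{(p)}$, namely $f_{lj}$ and $f_{ji}^{(p)}$, are updated by $v_j$ itself using the \emph{current}-step self-changes $c_{lj}^{(j)}[k]$ and $c_{ji}^{(j)}[k]$, so the self-part of $b_j^{(p)}[k+1]-b_j^{(p)}[k]$ really is exactly $-b_j^{(p)}[k]$ (when positive, zero otherwise) and the neighbor part $\sum_i\overline{c}_{ji}^{(i)}[k]-\sum_l\overline{c}_{lj}^{(l)}[k]$ is nonnegative. This yields $b_j^{(p)}[k]\ge 0 \Rightarrow b_j^{(p)}[k+1]\ge 0$, and then $b_j\ge b_j^{(p)}$ (Remark~\ref{perceivedSmaller_del}) transfers the conclusion to the actual balance. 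Your argument can be repaired by inserting this perceived-balance monotonicity as a lemma; without it, the direct attack on $b_j$ does not close.
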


\begin{proof}
We will first argue that nodes with nonnegative {\em perceived} flow balance at iteration $k$ can never reach negative {\em perceived} flow balance at iteration $k+1$. 
Combining this with the fact that the perceived flow balance of a node is always below its actual flow balance (see Remark~\ref{perceivedSmaller_del}), we establish the proof of the proposition.

Consider a node $v_j$ with a nonnegative perceived flow balance $b^{(p)}_j[k] \geq 0$ (since $b^{(p)}_j[k] \leq b_j[k]$, $\forall \ k \geq 0$ we have that also $b_j[k] \geq 0$). \\
\noindent
We analyze below the following two cases:
\begin{enumerate}
\item at least one neighbor of node $v_j$ has positive perceived flow balance,
\item all neighbors of node $v_j$ have negative or zero perceived flow balance.
\end{enumerate}
In both cases, since $b^{(p)}_j[k] \geq 0$, node $v_j$ will attempt to change the flows of (some of) its incoming and outgoing edges. 
Specifically, node $v_j$ will calculate the desirable flow change $c^{(j)}_{ji}[k]$ ($c^{(j)}_{lj}[k]$) for its incoming (outgoing) edges $(v_j, v_i)$ ($(v_l, v_j)$) where $v_i \in  \mathcal{N}_j^-$ ($v_l \in  \mathcal{N}_j^+$).
Then, 
it transmits the desired flow change $c^{(j)}_{ji}[k]$ ($c^{(j)}_{lj}[k]$) to its incoming (outgoing) edges $(v_j, v_i)$ ($(v_l, v_j)$) where $v_i \in  \mathcal{N}_j^-$ ($v_l \in  \mathcal{N}_j^+$).
In the first case (in which at least one neighbor of node $v_j$ has positive perceived flow balance), we have (i) $b^{(p)}_i[k] > 0$ for some $v_i \in  \mathcal{N}_j^-$, or (ii) $b^{(p)}_l[k] > 0$ for some $v_l \in  \mathcal{N}_j^+$.

For (i) we have that during the iteration $k$ of Algorithm~\ref{alg1_delays}, the incoming edge flows of $v_j$ might change by its in-neighbors (i.e., the flow of an incoming edge $(v_j,v_i)$ might be increased to be equal to $f_{ji}[k+1] = f_{ji}[k] + c_{ji}^{(i)}[k]$ for some $v_i \in \mathcal{N}_j^-$).
In this case, since the transmission of $c_{ji}^{(i)}[k]$ from $v_i$ to $v_j$ might undergo a time delay, we have that $v_j$ sets its outgoing flows to be $ f_{lj}[k+1] = f_{lj}[k] + c_{lj}^{(j)}[k] $ and its {\em perceived} incoming flows to be $ f_{ji}^{(p)}[k+1] = f_{ji}^{(p)}[k] + c_{ji}^{(j)}[k] $. Thus, we have that $b^{(p)}_j[k+1] = 0$. 
[Note however that, after $\tau^{(i)}_{ji}[k]$ time steps (during the iteration $k + \tau^{(i)}_{ji}[k]$) node $v_j$ will receive the desired flow change $c_{ji}^{(i)}[k]$ which was sent from node $v_i$ at time step $k$. 
Then it will update its its {\em perceived} incoming flows to be $ f_{ji}^{(p)}[k+\tau^{(i)}_{ji}[k]+1] = f_{ji}^{(p)}[k+\tau^{(i)}_{ji}[k]] + \overline{c}_{ji}^{(i)}[k+\tau^{(i)}_{ji}[k]] $, which means that $ b^{(p)}_j[k+\tau^{(i)}_{ji}[k]+1] > 0 $.]
As a result, for (i) we have that the nonnegative {\em perceived} flow balance of node $v_j$ at iteration $k$ remains nonnegative at iteration $k+1$.

For (ii) we have that the outgoing edge flows of $v_j$ might change by its out-neighbors $v_l \in  \mathcal{N}_j^+$ and it can be argued in a similar manner. 

In the second case, we have $b^{(p)}_i[k] \leq 0$ for every $v_i \in  \mathcal{N}_j^-$, and $b^{(p)}_l[k] \leq 0$ for every $v_l \in  \mathcal{N}_j^+$. 
This means that the neighbors of $v_j$ will not attempt to change the flows of its incoming and outgoing edges. 
As a result, since $v_j$ will transmit its desired flow changes and then set its outgoing flows to be $ f_{lj}[k+1] = f_{lj}[k] + c_{lj}^{(j)}[k] $ and its {\em perceived} incoming flows to be $ f_{ji}^{(p)}[k+1] = f_{ji}^{(p)}[k] + c_{ji}^{(j)}[k] $, we have that $b^{(p)}_j[k+1] = 0$. 

Overall, we have that during an iteration $k$ of Algorithm~\ref{alg1_delays}, nodes with nonnegative {\em perceived} flow balance can never reach negative {\em perceived} flow balance at iteration $k+1$.
From Remark~\ref{perceivedSmaller_del}, since $b^{(p)}_j[k] \leq b_j[k]$, $\forall \ k \geq 0$, we have that also nodes with nonnegative flow balance can never reach negative flow balance, thus establishing the proof of the proposition.
\end{proof}

\begin{prop}
\label{PROP3_del}
Consider the problem formulation described in Section~\ref{preliminaries}. 
During the execution of Algorithm~\ref{alg1_delays}, it holds that
$$
0 \leq \varepsilon[k+1] \leq \varepsilon[k] \; , \; \; \forall k \geq 0 \; ,
$$
where $\varepsilon[k] \geq 0$ is the total imbalance of the network at iteration~$k$ (see Definition~\ref{defn:totalim}).
\end{prop}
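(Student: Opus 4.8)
The plan is to track how the total imbalance $\varepsilon[k] = \sum_{j} |b_j[k]|$ evolves over one iteration of Algorithm~\ref{alg1_delays}, and to show it cannot increase. The natural starting point is Proposition~\ref{PROP3_del}'s companion, Proposition~\ref{PROP1_del} part~3), which says $\varepsilon[k] = 2\sum_{v_j \in \mathcal{V}^-[k]} |b_j[k]|$, and Proposition~\ref{PROP2_del}, which says $\mathcal{V}^-[k+1] \subseteq \mathcal{V}^-[k]$. Combining these, it suffices to show that for each node $v_j$ that remains in $\mathcal{V}^-$ (i.e.\ $v_j \in \mathcal{V}^-[k+1] \subseteq \mathcal{V}^-[k]$), its actual flow balance satisfies $b_j[k+1] \geq b_j[k]$, i.e.\ $|b_j[k+1]| \leq |b_j[k]|$ since both are negative. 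Indeed, if that holds then $\varepsilon[k+1] = 2\sum_{v_j \in \mathcal{V}^-[k+1]} |b_j[k+1]| \leq 2\sum_{v_j \in \mathcal{V}^-[k+1]} |b_j[k]| \leq 2\sum_{v_j \in \mathcal{V}^-[k]} |b_j[k]| = \varepsilon[k]$, where the last inequality uses $\mathcal{V}^-[k+1] \subseteq \mathcal{V}^-[k]$ and nonnegativity of $|b_j[k]|$.

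So the core step is: \emph{a node with negative flow balance never sees its (actual) flow balance decrease over one iteration}. First I would observe that a node $v_j \in \mathcal{V}^-[k]$ has $b_j[k] < 0$, hence (by Remark~\ref{perceivedSmaller_del}, $b_j^{(p)}[k] \le b_j[k] < 0$) it has negative perceived flow balance, so it does \emph{not} initiate any flow changes on its own edges at step~A/2. Therefore any change to $b_j[k+1] - b_j[k]$ comes entirely from changes inflicted by its neighbors: in-neighbors $v_i \in \mathcal{N}_j^-$ may \emph{increase} $f_{ji}$ (a neighbor only ever adds flow to its outgoing edges, which are incoming to $v_j$), which increases $b_j$; out-neighbors $v_l \in \mathcal{N}_j^+$ may \emph{decrease} $f_{lj}$ (a neighbor only ever subtracts flow from its incoming edges, which are outgoing from $v_j$), which again increases $b_j$ since $b_j = f_j^- - f_j^+$. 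Both effects push $b_j$ upward (or leave it unchanged), so $b_j[k+1] \geq b_j[k]$, as needed. One must be careful that delays only \emph{postpone} these neighbor-induced increments — a delayed change $\overline{c}^{(i)}_{ji}$ or $\overline{c}^{(l)}_{lj}$ arriving at a later step still has the same sign — so the monotonic-upward conclusion survives the delay model; this is exactly the kind of reasoning already used in the proof of Proposition~\ref{PROP2_del}.

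I expect the main obstacle to be the bookkeeping around projection onto the feasible interval $[l_{ji},u_{ji}]$ at step~6, together with the delays. When a neighbor's increment is clipped by the $\max/\min$ projection, one must check that clipping can only move $f_{ji}[k+1]$ closer to $f_{ji}[k]$ than the unprojected value, never past it in the wrong direction — and that the sign of the net change on each edge incident to $v_j$ is still favorable. Since at step~$k$ only one endpoint of each edge incident to $v_j$ is actively changing that edge (namely the neighbor, because $v_j$ itself is inactive), there is no simultaneous-change cancellation to worry about for these particular edges, which keeps the argument clean; the projection then simply truncates a same-sign increment, preserving the inequality. I would close by assembling the chain of inequalities above to conclude $0 \le \varepsilon[k+1] \le \varepsilon[k]$ for all $k \ge 0$.
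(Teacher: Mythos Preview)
Your proposal is correct and follows essentially the same approach as the paper: both reduce via Proposition~\ref{PROP1_del}(3) and Proposition~\ref{PROP2_del} to showing that any node $v_j\in\mathcal{V}^-[k]$ (hence inactive, since $b_j^{(p)}[k]\le b_j[k]<0$) can only have its actual balance pushed upward by neighbor-initiated changes, yielding $|b_j[k+1]|\le|b_j[k]|$. Your treatment is in fact slightly tidier than the paper's case analysis, and your explicit check that the projection in Step~6 preserves the sign of the increment (because only the active neighbor contributes on edges incident to $v_j$) fills a detail the paper leaves implicit.
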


\begin{proof} 
From the third  statement of Proposition~\ref{PROP1_del}, we have $\varepsilon[k+1] = 2 \sum_{v_j \in \mathcal{V}^-[k+1]} \vert b_j[k+1] \vert$ and $\varepsilon[k] = 2 \sum_{v_j \in \mathcal{V}^-[k]} \vert b_j[k] \vert$, whereas from Proposition~\ref{PROP2_del}, we have $\mathcal{V}^-[k+1] \subseteq \mathcal{V}^-[k]$.

Consider a node $v_j \in \mathcal{V}^-[k]$ with flow balance $b_j[k]<0$ (here we have that also $b^{(p)}_j[k]<0$ since $b^{(p)}_j[k] \leq b_j[k]$, $\forall \ k \geq 0$ from Remark~\ref{perceivedSmaller_del}). \\
\noindent
We analyze below the following two cases:
\begin{enumerate}
\item all neighbors of node $v_j$ have negative or zero perceived flow balance,
\item at least one neighbor of node $v_j$ has positive perceived flow balance.
\end{enumerate}
In both cases, node $v_j$ will not make any flow changes on its edges. 
In the first case we have that no node will perform any transmissions and thus, the flow balance of node $v_j$ will not change (i.e., $b_j[k+1] = b_j[k] < 0$ and $b^{(p)}_j[k+1] = b^{(p)}_j[k] < 0$). 
This means that for the first case we have $|b_j[k+1]|=|b_j[k]|$ and thus the contribution of node $v_j$ to $\varepsilon[k+1]$ remains the same as its contribution to $\varepsilon[k]$. 

In the second case, we have (i) $b^{(p)}_i[k] \geq 0$ for some $v_i \in  \mathcal{N}_j^-$, or (ii) $b^{(p)}_l[k] \geq 0$ for some $v_l \in  \mathcal{N}_j^+$. 

For (i) we have that during the iteration $k$ of Algorithm~\ref{alg1_delays}, the incoming edge flows of $v_j$ might change by its in-neighbors (i.e., the flow of an incoming edge $(v_j,v_i)$ might be increased to be equal to $f_{ji}[k+1] = f_{ji}[k] + c_{ji}^{(i)}[k]$ for some $v_i \in \mathcal{N}_j^-$).
In this case (regardless if we have a delay during the transmission of $c_{ji}^{(i)}[k]$ from $v_i$ to $v_j$) we have that $b_j[k+1]$ is either positive or $|b_j[k+1]| < |b_j[k]|$ (i.e., the contribution of node $v_j$ to $\varepsilon[k+1]$ is either zero or smaller than its contribution to $\varepsilon[k]$ using the third statement in Proposition~\ref{PROP1_del}).
For (ii) we have that during iteration $k$ of Algorithm~\ref{alg1_delays}, the out-neighbor of $v_j$ might transmit the desired change amount of the outgoing edge flows to node $v_j$.
In this case, if the transmission of $c_{lj}^{(l)}[k]$ is delayed, then the flow balance of $v_j$ will not change (i.e., $b_j[k+1] = b_j[k] < 0$), but when $v_j$ receives $\overline{c}_{lj}^{(l)}[k+\tau^{(l)}_{lj}[k]]$ then the flow balance of node $v_j$ will satisfy $b_j[k+1] \geq b_j[k]$ and thus $b_j[k+1]$ is either positive or $|b_j[k+1]| < |b_j[k]|$ (i.e., the contribution of node $v_j$ to $\varepsilon[k+1]$ is either zero or smaller than its contribution to $\varepsilon[k]$).
As a result, for both cases, we have $\varepsilon[k+1] \leq \varepsilon[k]$ (using the third statement in Proposition~\ref{PROP1_del}).
\end{proof}

\begin{prop}
\label{PROP4_del}
Consider the problem formulation described in Section~\ref{preliminaries} where the integer circulation conditions in Theorem~\ref{IntTheoremCirc} are satisfied. 
Algorithm~\ref{alg1_delays} balances the flows in the graph in a finite number of steps (i.e., $\exists \ k_0$ so that $\forall k \geq k_0$, $f_{ji}[k_0] = f_{ji}[k]$, $\forall (v_j,v_i) \in \mathcal{E}$, where $0 < l_{ji} \leq f_{ji}[k] \leq u_{ji}$, $\forall (v_j,v_i) \in \mathcal{E}$ and $b_j[k] = b_j[k_0] =0$, $\forall \ v_j \in \mathcal{V}$). 
\end{prop}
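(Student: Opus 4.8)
The plan is to argue that the non-increasing, integer-valued sequence $\varepsilon[k]$ cannot get stuck at a positive value, and then that, once $\varepsilon[k]=0$, the actual flows stop changing. Since every $f_{ji}[k]$ is an integer lying in $[l_{ji},u_{ji}]$, each $b_j[k]$ is an integer, so $\varepsilon[k]$ is a nonnegative integer (in fact even, by statement~3 of Proposition~\ref{PROP1_del}); by Proposition~\ref{PROP3_del} it is non-increasing, hence constant, equal to some $\varepsilon^\ast$, for all $k\ge k_1$ and some finite $k_1$. It therefore suffices to show $\varepsilon^\ast=0$ and then that the flows freeze.

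Next I would stabilize the combinatorial structure. By Proposition~\ref{PROP2_del} the sets $\mathcal{V}^-[k]$ are nested, hence equal to a fixed set $\mathcal{V}^-_\ast$ for all $k\ge k_2$ (some finite $k_2\ge k_1$); and, by the same analysis used in the proof of Proposition~\ref{PROP3_del}, $|b_j[k]|$ is non-increasing for each $v_j\in\mathcal{V}^-_\ast$, so --- since $\sum_{v_j\in\mathcal{V}^-_\ast}|b_j[k]|=\tfrac12\varepsilon^\ast$ stays constant --- each $|b_j[k]|$ is eventually constant, say for $k\ge k_3$. Thus, past $k_3$, every node of $\mathcal{V}^-_\ast$ keeps a fixed negative balance and never initiates a change; and since, by the sign structure of the protocol, any change that can be delivered to such a node (an in-neighbour raising one of its incoming flows, or an out-neighbour requesting a decrease of one of its outgoing flows) can only \emph{increase} $b_j$, the constancy of $|b_j[k]|$ forces that no nonzero change is ever delivered to a node of $\mathcal{V}^-_\ast$ after $k_3$.

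The core of the proof --- and the step I expect to be the main obstacle --- is to contradict this. Suppose $\varepsilon^\ast>0$, so $\mathcal{V}^-_\ast\neq\emptyset$; since $\sum_j b_j[k]=0$ (statement~2 of Proposition~\ref{PROP1_del}), some node has $b_j[k]>0$ at every $k\ge k_3$. Strong connectivity of $\mathcal{G}_d$ gives a directed flow-path from such a positive-balance node to a node of $\mathcal{V}^-_\ast$, and I would show that a unit of ``excess'' is forced along this path and reaches $\mathcal{V}^-_\ast$ within a bounded number of iterations, delivering a nonzero change there --- the desired contradiction. The mechanism: a node with positive \emph{perceived} balance acts at every iteration, and by the round-robin rule within at most $\mathcal{D}_j$ iterations it changes the edge of the path that carries flow forward (that edge is not saturated in the required direction, by Theorem~\ref{IntTheoremCirc} applied to $\mathcal{S}=\{v_j\}$); by the bounded-delay assumption this change reaches the next node within $\overline{\tau}$ further steps, weakly increasing that node's balance, so the argument iterates over the (at most $n-1$) edges of the path. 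The delicate points are: (a) a node may have positive actual but non-positive \emph{perceived} balance, in which case (Remark~\ref{perceivedSmaller_del}) flow is in transit toward it and arrives within $\overline{\tau}$ steps, either making its perceived balance positive or being absorbed earlier --- both sufficient; (b) the identity of the positive-balance nodes can change with $k$; and (c) saturation of edges along the path. Making this precise requires the right invariant --- for instance a potential such as the total imbalance weighted by graph distance to $\mathcal{V}^-_\ast$, shown to strictly decrease over windows of length bounded in terms of $n$, the degrees $\mathcal{D}_j$, and $\overline{\tau}$ --- and this is where the real work lies.

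Finally, once $\varepsilon[k_0]=0$ for some finite $k_0$, every $b_j[k]=0$ for $k\ge k_0$; since $b^{(p)}_j[k]\le b_j[k]=0$ (Remark~\ref{perceivedSmaller_del}), no node acts at or after $k_0$, so no new change requests are generated. Any request still in transit at $k_0$ must be a flow-\emph{increase} request --- a pending decrease request would, on delivery, turn some balance positive and break the monotonicity of $\varepsilon[k]$ --- and such requests update only \emph{perceived} flows. Hence $f_{ji}[k]=f_{ji}[k_0]$ for all $k\ge k_0$ and all $(v_j,v_i)\in\mathcal{E}$, with $0<l_{ji}\le f_{ji}[k_0]\le u_{ji}$ preserved throughout by the projection in Step~6 of Algorithm~\ref{alg1_delays}; this is precisely the claim of the proposition.
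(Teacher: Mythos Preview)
Your scaffolding --- stabilize the non-increasing integer $\varepsilon[k]$, freeze $\mathcal{V}^-[k]$, then seek a contradiction from $\varepsilon^\ast>0$ --- matches the paper's, and your final ``freezing'' paragraph is actually more carefully stated than the paper's. But the core contradiction step is genuinely different, and yours has a gap.

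The paper does not push excess along a path. It takes $\mathcal{S}=\mathcal{V}^{(+)}$, the set of nodes with positive balance infinitely often, and argues that every edge in $\mathcal{E}^-_{\mathcal{S}}$ must eventually sit at $\lceil l_{ji}\rceil$ and every edge in $\mathcal{E}^+_{\mathcal{S}}$ at $\lfloor u_{lj}\rfloor$: otherwise some node of $\mathcal{S}$, which (by bounded delays) has positive \emph{perceived} balance infinitely often and cycles through all its edges via round-robin, would eventually change such a boundary edge, delivering a unit across the cut and strictly decreasing $\varepsilon$. Then statement~1 of Proposition~\ref{PROP1_del} gives $\sum_{v_j\in\mathcal{S}}b_j[k]=\sum_{\mathcal{E}^-_{\mathcal{S}}}\lceil l_{ji}\rceil-\sum_{\mathcal{E}^+_{\mathcal{S}}}\lfloor u_{lj}\rfloor>0$, which directly violates the circulation condition~\eqref{EQnsconditions} for this nontrivial $\mathcal{S}$.

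Your path argument instead invokes Theorem~\ref{IntTheoremCirc} with $\mathcal{S}=\{v_j\}$ to claim that the \emph{specific} edge on your path at $v_j$ is unsaturated in the required direction. That inference is invalid: the singleton condition only says $\sum_{v_i\in\mathcal{N}_j^-}\lceil l_{ji}\rceil\le\sum_{v_l\in\mathcal{N}_j^+}\lfloor u_{lj}\rfloor$, so $v_j$ can balance itself using \emph{some} combination of edges --- it says nothing about the particular path edge. The round-robin may therefore skip that edge forever and send the unit to a different neighbour, possibly one no closer to $\mathcal{V}^-_\ast$, and the induction along the path stalls. You flag saturation as delicate point~(c) but then assert non-saturation in the main line; the distance-weighted potential you sketch does not obviously rescue this, since pushing to a farther neighbour can increase it. The cut argument sidesteps the whole difficulty: it is precisely the saturation of the \emph{entire} boundary of $\mathcal{V}^{(+)}$ that contradicts \eqref{EQnsconditions}, so one never needs to control where any individual unit goes.
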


\begin{proof} 
During the execution of the proposed distributed balancing algorithm, transmissions on each communication link $(v_l,v_j) \in \mathcal{E}$ are affected by arbitrary (time-varying and inhomogeneous) \textit{bounded} time delays (i.e., $0 \leq \tau^{(j)}_{lj}[k] \leq \overline{\tau}_{lj} \leq \infty$).
This means that the packets transmitted on each link $(v_l,v_j) \in \mathcal{E}$ will eventually reach the corresponding node after a finite number of steps.

By contradiction, suppose Algorithm~\ref{alg1_delays} runs for an infinite number of iterations and its total imbalance remains positive (i.e., $\varepsilon[k]>0$ for all $k$). 
Suppose now that Algorithm~\ref{alg1_delays} runs for an infinite number of iterations and its total imbalance remains positive (i.e., $\varepsilon[k]>0$ for all $k$).
This means that there is always (at each $k$) at least one node with positive flow balance.  
Let $\mathcal{V}^{(+)}[k] = \{ v_j \in \mathcal{V} \; | \; b_j^{(p)}[k] > 0 \}$ be the set of nodes that have positive flow balance at time step $k$.  
Now, let $\mathcal{V}^{(+)}$ denote the set of nodes that have positive flow balance {\em infinitely often}. [Since nodes with positive flow balance can become balanced (but not obtain negative flow balance), this means that nodes in the set $\mathcal{V}^{(+)}$ could become balanced at some iteration, as long as they obtain positive flow balance at later iterations.] 
Let us now denote $\mathcal{V}^{(+p)}$, where $\mathcal{V}^{(+p)} \subseteq \mathcal{V}^{(+)}$, as the set of nodes that have positive {\em perceived} flow balance {\em infinitely often}. 
[Since nodes in $\mathcal{V}^{(+)}$ have positive flow balance {\em infinitely often}, and the delays that affect transmissions on each link are bounded, we have that the nodes that belong in $\mathcal{V}^{(+)}$ will also obtain\footnote{Note that if the incoming (or outgoing) weights of node $v_j$ increase (or decrease) then, due to delays, it will not receive instantly the flow changes (i.e., it will assume no change happened on its incoming (or outgoing) weights). 
However, after a finite number of steps, $v_j$ will receive the flow changes and, by calculating its new perceived flow balance, it will notice that its perceived flow balance increased. 
As a result, if node $v_j$ obtains positive flow balance then, after a finite number of steps (since delays are bounded), it will also obtain positive perceived flow balance.} positive {\em perceived} flow balance {\em infinitely often}].
Also there is at least one node with negative flow balance after an infinite number of iterations (i.e.,  belongs in $\mathcal{V}^{(-)} = \lim_{k \rightarrow \infty} \mathcal{V}^{(-)}[k]$ where $\mathcal{V}^{(-)}[k] = \{ v_j \in \mathcal{V} \; | \; b_j[k] < 0 \}$).
This set is well defined (due to the fact that positively balanced nodes cannot become negatively balanced) and contains at least one node with negative flow balance (otherwise the graph is balanced). 
[Note that, from Remark~\ref{perceivedSmaller_del}, nodes with negative flow balance have also negative perceived flow balance.] 
The above discussion implies that as $k$ goes to infinity, the set of nodes $\mathcal{V}$ can be partitioned into three sets: $\mathcal{V}^{(+)}$, $\mathcal{V}^{(-)}$, and $\mathcal{V} - (\mathcal{V}^{(+)} \cup \mathcal{V}^{(-)})$ (the latter is the set of nodes that remain balanced after a finite number of steps -- and never obtain positive flow balance again). This is shown in Fig.~\ref{FlowConv}.

Since the graph is strongly connected, nodes in the set $\mathcal{V}^{(+)}$ need to be connected to/from nodes in the other two sets. 
This is shown via the dashed edges in Fig.~\ref{FlowConv} (note that the presence of all four types of edges is not necessary, but there has to be at least one edge from a node in $\mathcal{V}^{(+)}$ to a node in one of the two other sets, and at least one edge from a node in one of the two other sets to a node in $\mathcal{V}^{(+)}$). 

Take $\mathcal{S} \subset \mathcal{V}$ to be $\mathcal{V}^{(+)}$ and note that $\mathcal{S}$ has at least one node.
A node $v_j \in \mathcal{V}^{(-)}$ needs to have at least one in-neighbour $v_i$ (such that $v_i \in \mathcal{S}$ or $v_i \in \mathcal{V}-(\mathcal{V}^{(-)} \cup \mathcal{V}^{(+)})$) and at least one out-neighbour $v_l$ (such that $v_l \in \mathcal{S}$ or $v_l \in \mathcal{V}-(\mathcal{V}^{(-)} \cup \mathcal{V}^{(+)})$). 
In case $v_i \in \mathcal{V}-(\mathcal{V}^{(-)} \cup \mathcal{V}^{(+)})$ ($v_l \in \mathcal{V}-(\mathcal{V}^{(-)} \cup \mathcal{V}^{(+)})$), then it is easy to see that $f_{ji}$ ($f_{lj}$) will have to reach the value $f_{ji}=\lceil l_{ji} \rceil$ $f_{lj}$ ($f_{li}=\lfloor u_{ji} \rfloor$). The reason is that if that was not the case, node $v_j$ would attempt infinitely often to decrease the value of $f_{ji}$ (increase the value of $f_{lj}$) implying that node $v_i$ ($v_j$) would obtain positive balance infinitely often, which is a contradiction.

We next consider the case when node $v_j \in \mathcal{V}^{(-)}$ has at least one in-neighbor $v_i$ in $\mathcal{S}$ and/or at least one out-neighbor $v_l$ in $\mathcal{S}$. 
Since $v_i$ ($v_l$) also obtains positive {\em perceived} flow balance infinitely often it will attempt to increase (decrease) the flow $f_{ji}[k]$ (or $f_{lj}^{(p)}[k]$) by $c_{ji}^{(i)}[k]$ (or $c_{lj}^{(l)}[k]$). 
If this increase (decrease) happens, then we have that $b_j[k+\tau^{(i)}_{ji}[k]+1] > b_j[k]$ (or $b_j[k+\tau^{(l)}_{lj}[k]+1] > b_j[k]$) so that $v_j$ either arrives at a nonegative flow balance $b_j[k+\tau^{(i)}_{ji}[k]+1] > 0$ (or $b_j[k+\tau^{(l)}_{lj}[k]+1] > 0$) and after a finite number of steps at a nonnegative {\em perceived} flow balance (which is a contradiction), or $0 > b_j[k+\tau^{(i)}_{ji}[k]+1] > b_j[k]$ ($0 > b_j[k+\tau^{(l)}_{lj}[k]+1] > b_j[k]$) implying\footnote{From the third statement of Proposition~\ref{PROP1_del}, we have $\varepsilon[k+1] = 2 \sum_{v_j \in \mathcal{V}^-[k+1]} \vert b_j[k+1] \vert$ and $\varepsilon[k] = 2 \sum_{v_j \in \mathcal{V}^-[k]} \vert b_j[k] \vert$.} that $\varepsilon[k+\tau^{(i)}_{ji}[k]+1] < \varepsilon[k]$ (or $\varepsilon[k+\tau^{(l)}_{lj}[k]+1] < \varepsilon[k]$), which is also a contradiction because if the integer valued $\varepsilon[k]$ decreases infinitely often it will become zero.

Thus, the only possibility left is that the flows of edges outgoing from nodes in $\mathcal{S}$ cannot increase and the flows of edges incoming to nodes in $\mathcal{S}$ cannot decrease. 
In other words, for $k \geq k_0$ for some large enough $k_0$ we have
\begin{eqnarray*}
f_{ji}[k] = \lceil l_{ji} \rceil & & \forall (v_j, v_i) \in \mathcal{E}^-_\mathcal{S} \; , \\
f_{lj}[k] = \lfloor u_{lj} \rfloor & & \forall (v_l, v_j) \in \mathcal{E}^+_\mathcal{S} \; ,
\end{eqnarray*}
where $\mathcal{E}^-_\mathcal{S}$ and $\mathcal{E}^+_\mathcal{S}$ are defined by \eqref{REALEQinS} and \eqref{REALEQoutS} respectively.  

From the first statement of Proposition~\ref{PROP1_del}, for the set $\mathcal{S}$, we have that $\sum_{v_j \in \mathcal{S}} b_j[k] = \sum_{(v_j, v_i) \in \mathcal{E}^-_\mathcal{S}} f_{ji}[k] - \sum_{(v_l, v_j) \in \mathcal{E}^+_\mathcal{S}} f_{lj}[k]$. Thus, we have
$$
\sum_{(v_j, v_i) \in \mathcal{E}^-_\mathcal{S}} \lceil l_{ji} \rceil - \sum_{(v_l, v_j) \in \mathcal{E}^+_\mathcal{S}} \lfloor u_{lj} \rfloor = \sum_{v_j \in \mathcal{S}} b_j[k] > 0 \; , 
$$
which means that the integer circulation conditions do \textit{not hold} (i.e., a contradiction).

This means that if, after an infinite number of iterations, the total imbalance $\varepsilon$ of Algorithm~\ref{alg1_delays} remains positive, then the integer circulation conditions do \textit{not hold} for the given a strongly connected digraph $\mathcal{G}_d$.

As a result, if the integer circulation conditions do \textit{hold} for the given digraph, then, during the operation of Algorithm~\ref{alg1_delays}, the total imbalance $\varepsilon$ will become equal to zero after a finite number of iterations, and the proposed distributed algorithm will result in a flow-balanced digraph.
\end{proof}

\vspace{-0.3cm}
\begin{figure}[h]
\begin{center}
\includegraphics[width=0.5\columnwidth]{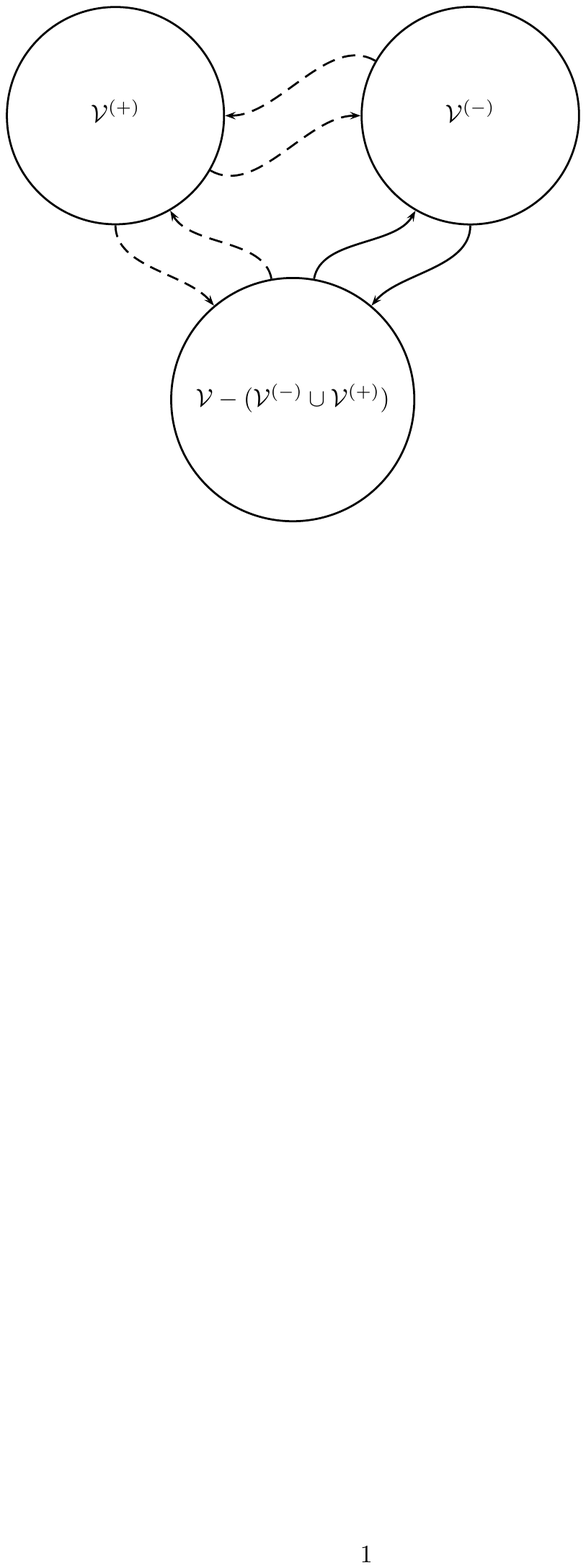}
\caption{Example of digraph where Theorem~\ref{IntTheoremCirc} does not hold for the dashed edges.}
\label{FlowConv}
\end{center}
\end{figure}


\begin{remark}
It is interesting to notice here that in of Algorithm~\ref{alg1_delays} only nodes $v_j$ with positive perceived flow balance $b^{(p)}_j[k] > 0$ execute the proposed protocol (i.e., they calculate the desired flow change for their outgoing and incoming edges) and perform transmissions towards their neighbours. 
Since, during the execution of Algorithm~\ref{alg1_delays}, we have that there exists $k_0$ for which $b_j[k_0] = 0$ for every $v_j \in \mathcal{V}$, this means that once every node $v_j$ reaches weight balancing, it will not perform any other transmission towards its in- and out-neihgbors for time steps $k \geq k_0$. 
Furthermore, a direct extension of the distributed protocol would involve its execution if (and only if), during each time step $k$, each node $v_j$ receives a (possibly delayed) desired flow change $\overline{c}^{(l)}_{lj}[k]$ or $\overline{c}^{(i)}_{ji}[k]$ from its in- and out-neihgbors.
As a result, the proposed distributed protocol can also be implemented in cases where there is need to reduce energy consumption, communication bandwidth, network congestion, and/or processor usage, by considering the use of event-triggered communication and control \cite{dimarogonas2012distributed, 2014:nowzari_cortes}. 
\end{remark}

\section{ROBUST INTEGER FLOW BALANCING ALGORITHM}
\label{algorithm_packetDr}

In this section we consider the case when packet transmissions might undergo unbounded delays (packet drops) and present a distributed flow algorithm that is robust to such events. 
The formal description of the algorithm is provided in Algorithm~\ref{alg1_packetDr}. 
The algorithm is iterative and operates by having, at each iteration, nodes with \textit{positive} (perceived) flow balance attempt to change the integer flows on both their incoming and/or outgoing edges so that they become flow balanced. 
[Note that the operation of Algorithm~\ref{alg1_packetDr} is similar to Algorithm~\ref{alg1_delays} with the main difference being that each node is required to calculate and transmit the {\em desirable flows} (and not the desired change amount) for its incoming and outgoing edges.]
Again, we assume that each node is in charge of assigning the flows on its outgoing edges (i.e., $f_{ji}$ is assigned by node $v_i$; due to possible packet drops the perceived flow $f^{(p)}_{ji}$ on this link by node $v_j$ might be different) which means that each node will know exactly the flows on its outgoing edges but only have access to perceived flows on its incoming edges.

We describe the iterative algorithm operations and we establish that, if the necessary and sufficient integer circulation conditions for the existence of a set of \textit{integer} flows that balance the given digraph are satisfied, the algorithm completes, almost surely, after a finite number of iterations.

\textbf{Initialization.} Same as Algorithm~\ref{alg1_delays}.

\textbf{Iteration.} At each iteration $k \geq 0$, node $v_j$ is aware of the {\em perceived} integer flows on its incoming edges $\{ f^{(p)}_{ji}[k] \; | \: v_i \in \mathcal{N}^-_j \}$ and the (actual) flows on its outgoing edges $\{ f_{lj}[k] \; | \: v_l \in \mathcal{N}^+_j \}$, which allow it to calculate its {\em perceived} flow balance $b^{(p)}_j[k]$ according to Definition~\ref{DEFpercnodebalance}.

\noindent
{\em A. Selecting Desirable Flows:} Each node $v_j$ with positive {\em perceived} flow balance (i.e., $b^{(p)}_j[k] > 0$) attempts to change the flows on its incoming edges $\{ f_{ji}[k] \; | \; v_i \in \mathcal{N}^-_j \}$ and/or outgoing edges $\{ f_{lj}[k] \; | \; v_l \in \mathcal{N}^+_j \}$ in a way that drives its perceived flow balance $b^{(p)}_j[k+1]$ to zero (at least if no other changes are inflicted on the flows).  
No attempt to change flows is made if node $v_j$ has negative or zero perceived flow balance. 
Specifically, node $v_j$ attempts to add $1$ to (or subtract $1$ from) its outgoing (or incoming) integer flows one at a time, according to a predetermined (cyclic) order, until its perceived flow balance becomes zero. 
If an outgoing (incoming) edge has reached its max (min) value (according to the feasible interval on that particular edge), then its flow does not change and node $v_j$ proceeds in changing the next one according to the predetermined order. 
The desired flow by node $v_j$ on edge $(v_j,v_i) \in \mathcal{E}$ at iteration $k$ will be denoted by $f_{ji}^{(j)}[k]$; similarly, the desired flow by node $v_j$ on edge $(v_l,v_j) \in \mathcal{E}$ at iteration $k$ will be denoted by $f_{lj}^{(j)}[k]$.

\noindent
{\em B. Exchanging Desirable Flows:} 
Once the nodes with positive {\em perceived} flow balance calculate the desirable incoming $\{ f^{(j)}_{ji}[k] \; | \; v_i \in \mathcal{N}^-_j \}$ and outgoing $\{ f^{(j)}_{lj}[k] \; | \; v_l \in \mathcal{N}^+_j \}$ flows, they take the following steps in sequence: 

\noindent
1) Node $v_j$ transmits (receives) the calculated desirable flows $f^{(j)}_{ji}[k]$ ($f^{(l)}_{lj}[k]$) to (from) their in- (out-) neighbor $v_i$ ($v_l$).
[Nodes with non-positive perceived balance simply transmit the values $f^{(p)}_{ji}[k]$.]

\noindent
2) If no flow is received from out-neighbor $v_l$ (due to a packet drop), then node $v_j$ assumes that $f^{(l)}_{lj}[k] = f_{lj}[k]$ for the corresponding outgoing edge $(v_l, v_j)$ which suffered a packet drop on the transmission on the reverse link from node $v_l$ to node $v_j$.
Then it calculates its new outgoing flows $ f_{lj}[k+1] = f_{lj}^{(l)}[k] + f_{lj}^{(j)}[k] - f_{lj}[k] $  (projected onto the feasible interval $[l_{lj}, u_{lj}]$) and it transmits them to each corresponding out-neighbor $v_l \in \mathcal{N}^+_j$. 

\noindent
3) It receives the new incoming flows $\{ f^{(p)}_{ji}[k+1] \; | \; v_i \in \mathcal{N}^-_j \}$ from each corresponding in-neighbor. 
If no flow is received then node $v_j$ assumes that $f^{(p)}_{ji}[k+1] = f^{(j)}_{ji}[k]$ for the corresponding incoming edge $(v_j, v_i)$ which suffered a packet drop.
The different flows that the nodes are exchanging (and what happens in the case of a packet drop) are shown in Fig.~\ref{nodes_exchV2}.

\begin{figure}[h]
\begin{center}
\includegraphics[width=0.8\columnwidth]{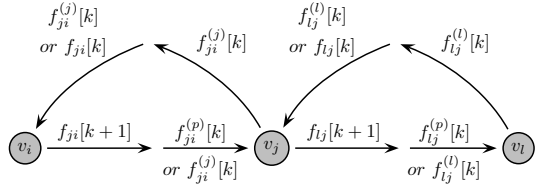}
\caption{Digraph where nodes exchange their desirable flows.}
\label{nodes_exchV2}
\end{center}
\end{figure}

Depending on the possible packet drops that might occur during the exchange of the desirable flows, we have the following four cases:
\begin{enumerate}
\item $f^{(j)}_{ji}[k]$ is dropped,
\item both $f^{(j)}_{ji}[k]$ and $f_{ji}[k+1]$ are dropped,
\item $f_{ji}[k+1]$ is dropped,
\item no packet is dropped.
\end{enumerate}
For the first two cases, the new flow on edge $(v_j,v_i) \in \mathcal{E}$ is taken to be $f_{ji}[k+1] = f^{(i)}_{ji}[k]$ where $l_{ji} \leq f^{(i)}_{ji}[k] \leq u_{ji}$ (the difference in the two cases is that in the second case the perceived value of the flow at node $v_j$ is $f_{ji}^{(p)}[k+1] = f_{ji}^{(j)}[k]$). 

\noindent
For the third and fourth cases, the new flow on edge $(v_j,v_i) \in \mathcal{E}$ is taken to be $f_{ji}[k+1] = [f^{(i)}_{ji}[k+1] + f^{(j)}_{ji}[k+1] - f_{ji}[k]]_{\lceil l_{ji} \rceil}^{\lfloor u_{ji} \rfloor}$ (where $[x]_{\lceil l_{ji} \rceil}^{\lfloor u_{ji} \rfloor}$ denotes the projection onto the interval).
The difference in the two cases is that in the third case we have $f_{ji}^{(p)}[k+1] = f_{ji}^{(j)}[k]$, while in the fourth $f_{ji}^{(p)}[k+1] = f^{(i)}_{ji}[k+1] + f^{(j)}_{ji}[k+1] - f_{ji}[k]$.

\if 0 
\begin{remark}
The above flow adjustment signifies that after iteration $k$ of the proposed distributed algorithm, once nodes $v_i$ and $v_j$ determine whether to increase (decrease) by $c_{ji}^{(i)}[k]$ ($c_{ji}^{(j)}[k]$) the flow of edge $(v_j,v_i)$ (where $c_{ji}^{(i)}[k], -c_{ji}^{(j)}[k] \in \mathbb{N}_0$), the new flow of edge $(v_j,v_i)$ will be $f_{ji}[k+1] = f_{ji}[k] + c_{ji}^{(i)}[k] + c_{ji}^{(j)}[k]$. According to the flow adjustment we have that $1 \leq l_{ji} \leq f_{ji}[k] + c_{ji}^{(i)}[k] \leq u_{ji}$ and $1 \leq l_{ji} \leq f_{ji}[k] + c_{ji}^{(j)}[k] \leq u_{ji}$, where $c_{ji}^{(i)}[k] \geq 0$, and $c_{ji}^{(j)}[k] \leq 0$ respectively. As a result we have that $1 \leq l_{ji} \leq f_{ji}[k] + c_{ji}^{(i)}[k] + c_{ji}^{(j)}[k] \leq u_{ji}$ $\Rightarrow$  $1 \leq l_{ji} \leq f_{ji}[k+1] \leq u_{ji}$ and $f_{ji}[k+1] \in \mathbb{N}_0$.  
\end{remark}
\fi

\begin{algorithm}
\caption{Distributed Resilient Integer Flow Algorithm}
\textbf{Input} \\ (Inputs are the same as Algorithm~\ref{alg1_delays}). \\
\textbf{Initialization} \\ (Steps~1, 2 are the same as Algorithm~\ref{alg1_delays}).
\\
\textbf{Iteration} \\ For $k=0,1,2,\dots$, each node $v_j \in \mathcal{V}$ does the following:
\\ 1) It computes its \textit{perceived flow balance} as in Definition~\ref{DEFpercnodebalance}.
\\ 2) If  $b^{(p)}_j[k] > 0$, it increases (decreases) by $1$ the integer flows $f_{lj}[k]$ ($f_{ji}^{(p)}[k]$) of its outgoing (incoming) edges $v_l \in \mathcal{N}_j^+$ ($v_i \in \mathcal{N}_j^-$) one at a time, following the predetermined order $P_{lj}$ ($P_{ji}$) until its flow balance becomes zero (if an edge has reached its maximum value, its flow does not change and node $v_j$ proceeds in changing the next one according to the predetermined order in a round-robin fashion). 
Then, it stores the desirable flows on each incoming edge as $f_{ji}^{(j)}[k]$ and each outgoing edge as $f_{lj}^{(j)}[k]$.
\\ 3) If $b^{(p)}_j[k] \leq 0$, it sets $f_{lj}^{(j)}[k] = f_{lj}[k]$ (and $f_{ji}^{(j)}[k] = f_{ji}^{(p)}[k]$) for its outgoing (incoming) edges in $\mathcal{E}_j^+$ ($\mathcal{E}_j^-$).
\\ 4) It transmits the new flow $f_{ji}^{(j)}[k]$ on each incoming edge.
\\ 5) It receives the new flow $f_{lj}^{(l)}[k]$ from each outgoing edge (if no flow was received then it assumes that $f_{lj}^{(l)}[k] = f_{lj}[k]$).
\\ 6) It sets its outgoing flows to be 
$$
f_{lj}[k+1] = f_{lj}^{(l)}[k] + f_{lj}^{(j)}[k] - f_{lj}[k] .
$$
7) It transmits the new flow $f_{lj}[k+1]$ on each outgoing edge.
\\ 8) It receives new flow $f_{ji}^{(p)}[k+1]$ from each incoming edge (if no flow is received then it assumes that $f_{ji}^{(p)}[k+1] = f_{ji}^{(j)}[k]$). 
\\ 9) It repeats (increases $k$ to $k+1$ and goes back to Step~1).
\label{alg1_packetDr}
\end{algorithm}



\begin{remark}\label{perceivedSmaller}
It is important to note here that the total {\em perceived} in-flow $f_j^{-(p)}$ of node $v_j$ might be affected from possible packet drops at Step~7 of Algorithm~\ref{alg1_packetDr}. 
Specifically, if a packet drop occurs; then $v_j$ assumes $f_{ji}^{(p)}[k+1] = f_{ji}^{(j)}[k]$ where $f_{ji}^{(j)}[k] \leq f_{ji}[k+1]$ (since nodes only attempt to make changes on the flows if their perceived flow balance is positive, node $v_i$ can only increase the flow of edge $(v_j, v_i)$).
This means that during the execution of Algorithm~\ref{alg1_packetDr} we have $f_{ji}^{(p)}[k] \leq f_{ji}[k]$ for each edge $(v_j,v_i) \in \mathcal{E}$, at each time step $k$ which means that the perceived balance $b_j^{(p)}[k]$ of node $v_j$ at iteration $k$ is always smaller or equal to its actual balance $b_j[k]$ (i.e., $b_{j}^{(p)}[k] \leq b_{j}[k]$). 
\end{remark}

%
%
%
%

\subsection{Proof of Algorithm Completion}\label{convergence_packetDr}

In this section, we show that, as long as the integer circulation conditions hold, then the total imbalance $\varepsilon[k]$ in Definition~\ref{defn:totalim} goes to zero after a finite number of iterations of Algorithm~\ref{alg1_packetDr}. 
This implies that the flow balance $b_j[k]$ for each node $v_j \in \mathcal{V}$ goes to zero after a finite number of iterations, and thus (from the flow updates in Algorithm~\ref{alg1_packetDr}) the integer flow $f_{ji}[k]$ on each edge $(v_j, v_i) \in \mathcal{E}$ stabilizes to an integer value $f_{ji}^*$ (where $f_{ji}^* \in \mathbb{N}_0$) within the given lower and upper limits, i.e., $1 \leq l_{ji} \leq f^*_{ji} \leq u_{ji}$ for all $(v_j, v_i) \in \mathcal{E}$. 
As in \cite{2016:RikosHadj} we assume that $l_{ji} \geq 1$ for each edge $(v_j, v_i) \in \mathcal{G}$.

\vspace{0.1cm}

\begin{prop}
\label{PROP2_packDr}
Consider the problem formulation described in Section~\ref{preliminaries}. Let $\mathcal{V}^-[k] \subset \mathcal{V}$ be the set of nodes with negative flow balance at iteration $k$, i.e., $\mathcal{V}^-[k] = \{ v_j \in \mathcal{V} \; | \; b_j[k] < 0 \}$.  During the execution of Algorithm~\ref{alg1_packetDr}, we have that
$$
\mathcal{V}^-[k+1] \subseteq \mathcal{V}^-[k].
$$
\end{prop}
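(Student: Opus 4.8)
The plan is to mirror the proof of Proposition~\ref{PROP2_del}, adapting it to the packet-drop model of Algorithm~\ref{alg1_packetDr}. The key observation carries over: it suffices to show that a node $v_j$ with nonnegative \emph{perceived} flow balance $b^{(p)}_j[k] \geq 0$ can never reach negative \emph{perceived} flow balance $b^{(p)}_j[k+1] < 0$; combining this with Remark~\ref{perceivedSmaller} (which gives $b^{(p)}_j[k] \leq b_j[k]$ for all $k$) then immediately yields that nodes with nonnegative actual flow balance stay nonnegative, which is the contrapositive of the claimed inclusion $\mathcal{V}^-[k+1] \subseteq \mathcal{V}^-[k]$.

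First I would fix a node $v_j$ with $b^{(p)}_j[k] \geq 0$ (so also $b_j[k] \geq 0$). Since $b^{(p)}_j[k] \geq 0$, node $v_j$ either is exactly balanced in perception (and sets $f^{(j)}_{ji}[k] = f^{(p)}_{ji}[k]$, $f^{(j)}_{lj}[k] = f_{lj}[k]$, i.e., desires no change) or has $b^{(p)}_j[k] > 0$ and computes desirable flows driving $b^{(p)}_j[k+1]$ to zero \emph{assuming no other changes are inflicted}. The crucial structural fact, exactly as in the delay case, is that the only changes \emph{other} nodes inflict on edges touching $v_j$ are: increases to incoming flows $f_{ji}$ (imposed by in-neighbors $v_i$ with $b^{(p)}_i[k] > 0$) and decreases to outgoing flows $f_{lj}$ (imposed by out-neighbors $v_l$ with $b^{(p)}_l[k] > 0$) — in both cases changes that can only raise $b_j$, hence raise the true balance. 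The subtlety specific to Algorithm~\ref{alg1_packetDr} is what $v_j$ \emph{perceives} after a possible packet drop. I would go through Steps~5--8: for each incoming edge $(v_j,v_i)$, either $v_j$ receives the updated $f^{(p)}_{ji}[k+1]$, or (packet drop) it assumes $f^{(p)}_{ji}[k+1] = f^{(j)}_{ji}[k]$, which is precisely the value it used when computing its desired change. For each outgoing edge, either $v_j$ receives $f^{(l)}_{lj}[k]$ and sets $f_{lj}[k+1]=f^{(l)}_{lj}[k]+f^{(j)}_{lj}[k]-f_{lj}[k]$, or (drop) it assumes $f^{(l)}_{lj}[k]=f_{lj}[k]$ and hence sets $f_{lj}[k+1]=f^{(j)}_{lj}[k]$ — again the value it planned on. In either case the \emph{perceived} balance $b^{(p)}_j[k+1]$ equals the value $v_j$ intended to produce, namely $0$ (or it is already $0$ when $v_j$ desired no change), so $b^{(p)}_j[k+1] \geq 0$. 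I would also note the projection onto $[\lceil l_{ji}\rceil,\lfloor u_{ji}\rfloor]$ can only make incoming perceived flows smaller and outgoing flows larger relative to the unprojected update in the directions relevant here, which if anything helps keep $b^{(p)}_j$ nonnegative — but I'd phrase this carefully since the saturation happens symmetrically; the clean argument is just that $v_j$'s own round-robin procedure never overshoots past zero, mimicking the corresponding sentence in Proposition~\ref{PROP2_del}.

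I expect the main obstacle to be bookkeeping the bracketed $[\cdot]^{\lfloor u_{ji}\rfloor}_{\lceil l_{ji}\rceil}$ projections together with the four packet-drop cases enumerated before Algorithm~\ref{alg1_packetDr}, and making sure that in each of the four cases the quantity node $v_j$ uses for $f^{(p)}_{ji}[k+1]$ never exceeds what it accounted for when zeroing its perceived balance. The honest way to handle this is to invoke Remark~\ref{perceivedSmaller} up front: it already establishes $f^{(p)}_{ji}[k+1] \leq f_{ji}[k+1]$ and $b^{(p)}_j[k] \leq b_j[k]$ for the algorithm, so I only need the one-line increment fact that $v_j$'s Step~2 procedure, applied when $b^{(p)}_j[k]>0$, yields a planned perceived balance of exactly zero, and that a drop on any incoming or outgoing transmission makes $v_j$ fall back on exactly the planned value. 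Then the conclusion $b^{(p)}_j[k+1] \geq 0$ follows, and Remark~\ref{perceivedSmaller} upgrades it to $b_j[k+1] \geq 0$, completing the proof; the remark about delayed/late information causing a \emph{later} increase in perceived balance (the bracketed aside in Proposition~\ref{PROP2_del}) has a packet-drop analogue — a retransmitted or subsequently successful packet — but it only ever pushes the perceived balance \emph{up}, so it does not threaten the inclusion.
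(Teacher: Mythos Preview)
Your proposal is correct and follows essentially the same strategy as the paper: show that $b^{(p)}_j[k]\geq 0$ implies $b^{(p)}_j[k+1]\geq 0$, then invoke Remark~\ref{perceivedSmaller}; the paper carries this out by an explicit computation of $b^{(p)}_j[k+1]$ in the no-drop and one-drop scenarios (using $f^{(i)}_{ji}[k]\geq f_{ji}[k]$ and $f^{(l)}_{lj}[k]\leq f_{lj}[k]$), whereas you phrase it in terms of comparing received values to the ``planned'' values $f^{(j)}_{ji}[k],\,f^{(j)}_{lj}[k]$. One wording slip to fix: in the no-drop case the perceived balance is $\geq 0$, not ``equals $0$'' (it is strictly positive whenever a neighbor actually moved a flow), but your stated conclusion $b^{(p)}_j[k+1]\geq 0$ is exactly what is needed and is correct.
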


\begin{proof}
We will first argue that nodes with nonnegative {\em perceived} flow balance at iteration $k$ can never reach negative perceived flow balance at iteration $k+1$. 
Combining this with the fact that the perceived flow balance of a node is always below its actual flow balance (see Remark~\ref{perceivedSmaller}), we establish the proof of the proposition.

Consider a node $v_j$ with a nonnegative perceived flow balance $b^{(p)}_j[k] \geq 0$ (from Remark~\ref{perceivedSmaller} we have $b_j[k] \geq 0$). \\
\noindent
We analyze below the following two cases:
\begin{enumerate}
\item at least one neighbor of node $v_j$ has positive perceived flow balance,
\item all neighbors of node $v_j$ have negative or zero perceived flow balance.
\end{enumerate}
In both cases, since $b^{(p)}_j[k] \geq 0$, node $v_j$ will attempt to change the flows of (some of) its incoming and outgoing edges. 
Specifically, node $v_j$ calculates the desirable flow $f^{(j)}_{ji}[k]$ ($f^{(j)}_{lj}[k]$) for its incoming (outgoing) edges $(v_j, v_i)$ ($(v_l, v_j)$) where $v_i \in  \mathcal{N}_j^-$ ($v_l \in  \mathcal{N}_j^+$).

In the first case, both in- and out-neighbors ($v_i$ and $v_l$ respectively) of $v_j$ will calculate the desirable flows for their incoming and outgoing edges.
Depending on the possible packet drops that might occur during the transmissions from node $v_i$ to node $v_j$, we consider the following two scenarios:
\begin{enumerate}
\item[a)] no packet is dropped,
\item[b)] at least one packet is dropped.
\end{enumerate}

\noindent
Recall that from the perceptive of node $v_j$ the following transmissions take place: first, node $v_j$ sends $f^{(j)}_{ji}[k]$ to each in-neighbor $v_i \in  \mathcal{N}_j^-$. 
Then it receives $f^{(l)}_{lj}[k]$ from every out-neighbor $v_l \in  \mathcal{N}_j^+$ and finally, once it calculates the new flows $f_{lj}[k+1]$ for its outgoing edges $(v_l, v_j)$ (where $v_l \in  \mathcal{N}_j^+$), it transmits them to every out-neighbor $v_l \in  \mathcal{N}_j^+$.

\noindent
For the first scenario (a), we have 
\begin{eqnarray}
b^{(p)}_j[k+1] & = & \sum_{v_i \in \mathcal{N}_j^-} f^{{(p)}}_{ji}[k+1] - \sum_{v_l \in \mathcal{N}_j^+} f_{lj}[k+1]\label{perc_imb_1} \\
 & = & \sum_{v_i \in \mathcal{N}_j^-} ( f^{(i)}_{ji}[k] + f^{(j)}_{ji}[k] - f_{ji}[k] ) - \nonumber \\
 & & \; \; - \sum_{v_l \in \mathcal{N}_j^+} ( f^{(j)}_{lj}[k] + f^{(l)}_{lj}[k] - f_{lj}[k] ) \; . \nonumber
\end{eqnarray}

\noindent
Since
\begin{equation}\label{perc_imb_2}
\sum_{v_i \in \mathcal{N}_j^-} f^{(j)}_{ji}[k] = \sum_{v_l \in \mathcal{N}_j^+} f^{(j)}_{lj}[k] ,
\end{equation}
(\ref{perc_imb_1}) becomes
\begin{eqnarray}
b^{(p)}_j[k+1] & = & \sum_{v_i \in \mathcal{N}_j^-} ( f^{(i)}_{ji}[k] - f_{ji}[k] ) - \nonumber \\
 & & \; \; - \sum_{v_l \in \mathcal{N}_j^+} ( f^{(l)}_{lj}[k] - f_{lj}[k] )\label{perc_imb_3} \; . 
\end{eqnarray}

\noindent
Also, since $f^{(i)}_{ji}[k] \geq f_{ji}[k]$ and $f^{(l)}_{lj}[k] \leq f_{lj}[k]$, $\forall \ (v_j, v_i), \ (v_l, v_j) \in \mathcal{E}$, we conclude $b^{(p)}_j[k+1] \geq 0, \; \forall v_j \in \mathcal{V}$.
\if 0

\begin{equation}\label{perc_imb_4}
\sum_{v_i \in \mathcal{N}_j^-} ( f^{(i)}_{ji}[k] - f_{ji}[k] ) \geq 0 , 
\end{equation}
and 
\begin{equation}\label{perc_imb_5}
\sum_{v_l \in \mathcal{N}_j^+} ( f^{(l)}_{lj}[k] - f_{lj}[k] ) \leq 0 . 
\end{equation}

\noindent
Combining (\ref{perc_imb_3}), (\ref{perc_imb_4}) and (\ref{perc_imb_5}), we have that (\ref{perc_imb_3}) becomes 
\begin{eqnarray}
b^{(p)}_j[k+1] & = & \sum_{v_i \in \mathcal{N}_j^-} ( f^{(i)}_{ji}[k] - f_{ji}[k] ) - \nonumber \\
 & & \; \; - \sum_{v_l \in \mathcal{N}_j^+} ( f^{(l)}_{lj}[k] - f_{lj}[k] ) \nonumber \\
 & \geq & 0 \; . \nonumber
\end{eqnarray}

\fi
As a result we conclude that, for scenario (a), the nonnegative {\em perceived} flow balance of node $v_j$ at iteration $k$ remains nonnegative at iteration $k+1$.

For scenario (b), let us assume (without loss of generality) that $f_{ji}[k+1]$, sent from node $v_i$ to node $v_j$ at Step~7 of the proposed algorithm, suffered a packet drop while all the other transmissions were successful. We have that 
{\small \begin{eqnarray}
b^{(p)}_j[k+1] & = & \sum_{v_{i'} \in \mathcal{N}_j^-} f^{(p)}_{ji'}[k+1] - \sum_{v_l \in \mathcal{N}_j^+} f_{lj}[k+1] \nonumber \\ 
 & = & f_{ji}^{(j)}[k] + \sum_{v_{i'} \in \mathcal{N}_j^- - \{ v_i \} } ( f_{ji'}[k+1] ) + f^{(j)}_{ji}[k] - \nonumber \\
 & & \; \; - \sum_{v_l \in \mathcal{N}_j^+} ( f^{(j)}_{lj}[k] + f^{(l)}_{lj}[k] - f_{lj}[k] ) \; , \nonumber
\end{eqnarray}
}
\noindent
which, in a similar manner, leads to the conclusion that $b^{(p)}_j[k+1] \geq 0, \; \forall v_j \in \mathcal{V}$.
\if 0

\noindent
Taking into account the fact that From (\ref{perc_imb_2}), (\ref{perc_imb_4}) and (\ref{perc_imb_5}) we have that (\ref{perc_imb_6}) becomes
\begin{eqnarray}
b^{(p)}_j[k+1] & = & \sum_{v_{i'} \in \mathcal{N}_j^-} f^{(p)}_{ji'}[k+1] - \sum_{v_l \in \mathcal{N}_j^+} f_{lj}[k+1] \nonumber \\
 & \geq & 0 \; . \nonumber
\end{eqnarray}

\fi
Thus, for scenario (b), we conclude that if only the transmission from node $v_i$ to node $v_j$ suffered a packet drop, the nonnegative {\em perceived} flow balance of node $v_j$ at iteration $k$ remains nonnegative at iteration $k+1$.

The remaining scenarios, where multiple transmissions suffer packet drops during the same iteration $k$, as well as the remaining cases, where all neighbors of node $v_j$ have negative or zero perceived flow balance, can be argued in a similar manner.

As a result we have that during an iteration $k$ of Algorithm~\ref{alg1_packetDr}, nodes with nonnegative {\em perceived} flow balance can never reach negative perceived flow balance at iteration $k+1$.
From Remark~\ref{perceivedSmaller}, since $b^{(p)}_j[k] \leq b_j[k]$, $\forall \ k \geq 0$, we have that also nodes with nonnegative flow balance can never reach negative flow balance, thus establishing the proof of the proposition.
\end{proof}

\begin{prop}
\label{PROP3_packDr}
Consider the problem formulation described in Section~\ref{preliminaries}. During the execution of Algorithm~\ref{alg1_packetDr}, it holds that
$$
0 \leq \varepsilon[k+1] \leq \varepsilon[k] \; , \; \; \forall k \geq 0 \; ,
$$
where $\varepsilon[k] \geq 0$ is the total imbalance of the network at iteration~$k$ (see Definition~\ref{defn:totalim}).

\if 0 

Moreover, we have 
$$
0 \leq \varepsilon[k+1] < \varepsilon[k] \; , 
$$
unless 
\begin{enumerate}
\item $\varepsilon[k]=0$ (in which case $\varepsilon[k+1] = \varepsilon[k] = 0$) or 
\item all nodes $v_j \in \mathcal{V}^-[k]$ satisfy the following: (i) for each of their in-neighbors $v_i \in \mathcal{N}_j^-$ we have $b_i[k] \leq 0$ or $f_{ji}[k] = u_{ji}$ and for each of their out-neighbors $v_l \in \mathcal{N}_j^+$ we have $b_l[k] \leq 0$ or $f_{ji}[k] = l_{ji}$.
\end{enumerate}

\fi

\end{prop}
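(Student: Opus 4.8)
The plan is to follow the same skeleton as the proof of Proposition~\ref{PROP3_del}. By the third statement of Proposition~\ref{PROP1_del} (a purely combinatorial identity on any integer flow assignment, hence valid at every iteration of Algorithm~\ref{alg1_packetDr}), we have $\varepsilon[k] = 2\sum_{v_j \in \mathcal{V}^-[k]} |b_j[k]|$ and $\varepsilon[k+1] = 2\sum_{v_j \in \mathcal{V}^-[k+1]} |b_j[k+1]|$, and by Proposition~\ref{PROP2_packDr} we have $\mathcal{V}^-[k+1] \subseteq \mathcal{V}^-[k]$. Hence it suffices to show that for every $v_j \in \mathcal{V}^-[k]$ the actual flow balance does not become more negative, i.e. $b_j[k+1] \geq b_j[k]$: then $v_j$'s contribution to $\varepsilon[k+1]$ is either $0$ (if $b_j[k+1] \geq 0$, i.e. $v_j \notin \mathcal{V}^-[k+1]$) or $|b_j[k+1]| \leq |b_j[k]|$, and summing over $\mathcal{V}^-[k+1] \subseteq \mathcal{V}^-[k]$ gives $\varepsilon[k+1] \leq \varepsilon[k]$.

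Fix $v_j \in \mathcal{V}^-[k]$. Since $b^{(p)}_j[k] \leq b_j[k] < 0$ by Remark~\ref{perceivedSmaller}, node $v_j$ has non-positive perceived balance and therefore performs no changes of its own: by Step~3 of Algorithm~\ref{alg1_packetDr} it sets $f^{(j)}_{lj}[k] = f_{lj}[k]$ for each outgoing edge and $f^{(j)}_{ji}[k] = f^{(p)}_{ji}[k]$ for each incoming edge. Consequently any change in $b_j = f_j^- - f_j^+$ is driven by $v_j$'s neighbors, and I would split into the two cases of Proposition~\ref{PROP2_packDr}: (1) all neighbors of $v_j$ have non-positive perceived balance, so no neighbor attempts a change and, examining Steps~6--8 together with the possible packet drops, one checks that the flows on all edges incident to $v_j$ stay put, so $b_j[k+1] = b_j[k]$; and (2) at least one neighbor has positive perceived balance. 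In case (2), an in-neighbor $v_i$ with $b^{(p)}_i[k] > 0$ can only \emph{increase} the flow $f_{ji}$ (it is an outgoing edge for $v_i$), and an out-neighbor $v_l$ with $b^{(p)}_l[k] > 0$ can only \emph{decrease} $f_{lj}$ (it is an incoming edge for $v_l$). Going through the four packet-drop cases listed before Algorithm~\ref{alg1_packetDr} — using $f^{(i)}_{ji}[\cdot] \geq f_{ji}[k]$, $f^{(l)}_{lj}[\cdot] \leq f_{lj}[k]$ (the latter via Remark~\ref{perceivedSmaller}), the convention that a dropped packet makes the recipient retain its last value, and the projections onto $[\lceil l_{ji}\rceil,\lfloor u_{ji}\rfloor]$ — one obtains $f_j^-[k+1] \geq f_j^-[k]$ and $f_j^+[k+1] \leq f_j^+[k]$, hence $b_j[k+1] \geq b_j[k]$, as needed. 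Combining with Remark~\ref{perceivedSmaller} then also propagates the conclusion to the actual balances, exactly as in the proof of Proposition~\ref{PROP3_del}.

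The organizing observation I would state up front is that every node only ever pushes a flow in the direction that improves its own perceived balance (raising in-flows, lowering out-flows when positively balanced), and a packet drop is conservative in that the affected endpoint keeps the last value it knows; neither of these can move $b_j$ downward for an idle node $v_j$. The main obstacle, and the part genuinely new relative to the bounded-delay proof, is the bookkeeping around the mismatch between actual and perceived flows: because Algorithm~\ref{alg1_packetDr} exchanges desirable \emph{flow values} rather than change amounts, a stale perceived value $f^{(p)}_{ji}[k] < f_{ji}[k]$ at $v_j$ can be echoed back over the reverse link and shrink the actual flow $f_{ji}$. I would handle this by showing that such a gap can only have been created earlier by an \emph{active} in-neighbor $v_i$ (so $b^{(p)}_i[\cdot] > 0$ and hence $b_i[\cdot] > 0$ at that time, which by Proposition~\ref{PROP2_packDr} keeps $v_i$ out of $\mathcal{V}^-$ thereafter), together with a matched-pair accounting: any unit by which $f_{ji}$ drops decreases $b_j$ by one and increases $b_i$ by one, and one must verify that these increments never leave $\varepsilon$ a net increase — either because the partner node is also in $\mathcal{V}^-[k]$ (exact cancellation in the sum defining $\varepsilon$) or because, tracking the projections, such an echo-back cannot actually occur for an edge whose far endpoint is negatively balanced. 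This careful case-by-case verification over all drop patterns and projection boundary cases is where essentially all the work lies; the rest is bookkeeping identical to Proposition~\ref{PROP3_del}.
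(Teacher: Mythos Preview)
Your first two paragraphs are essentially the paper's proof: invoke Proposition~\ref{PROP1_del}(3) and Proposition~\ref{PROP2_packDr}, fix $v_j \in \mathcal{V}^-[k]$, note that $v_j$ itself initiates no changes, split on whether any neighbor has positive perceived balance, and argue that an active in-neighbor can only raise $f_{ji}$ while an active out-neighbor can only lower $f_{lj}$, so $|b_j|$ cannot grow. The paper's argument is exactly this brief; it does not walk through all four drop patterns but simply asserts, for instance, that in case~(i) ``the flow of an incoming edge $(v_j,v_i)$ might be increased'' and that therefore $|b_j[k+1]| \le |b_j[k]|$, and handles case~(ii) analogously.

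Your third paragraph goes beyond the paper by flagging the echo-back of a stale perceived value $f^{(p)}_{ji}[k] < f_{ji}[k]$ --- a concern the paper does not raise. The worry is real: if $v_j$ (with $b_j^{(p)}[k]\le 0$) transmits $f^{(j)}_{ji}[k] = f^{(p)}_{ji}[k]$ at Step~4 and $v_i$ (now with $b_i^{(p)}[k]=0$, hence $f^{(i)}_{ji}[k]=f_{ji}[k]$) receives it, Step~6 gives $f_{ji}[k+1] = f^{(p)}_{ji}[k] < f_{ji}[k]$, which \emph{decreases} $b_j$. But your proposed resolution does not close the gap. By your own observation the gap was created by $v_i$ with $b_i^{(p)}>0$ at some earlier step, so by Proposition~\ref{PROP2_packDr} $v_i \notin \mathcal{V}^-[k]$ --- ruling out your ``partner also in $\mathcal{V}^-[k]$, exact cancellation'' route. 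And nothing in the interval projections blocks the echo-back just described on an edge whose \emph{near} endpoint $v_j$ is negatively balanced, so your second route does not apply either. The ``careful case-by-case verification'' you explicitly defer is therefore precisely the missing substance; to make the argument go through you would need a different invariant, e.g.\ one that also accounts for the aggregate perceived/actual discrepancy $\sum_{(v_j,v_i)\in\mathcal{E}} (f_{ji}-f^{(p)}_{ji})$ alongside $\varepsilon$, rather than the two alternatives you list.
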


\begin{proof} 
From the third statement of Proposition~\ref{PROP1_del}, we have $\varepsilon[k+1] = 2 \sum_{v_j \in \mathcal{V}^-[k+1]} \vert b_j[k+1] \vert$ and $\varepsilon[k] = 2 \sum_{v_j \in \mathcal{V}^-[k]} \vert b_j[k] \vert$, whereas from Proposition~\ref{PROP2_packDr}, we have $\mathcal{V}^-[k+1] \subseteq \mathcal{V}^-[k]$.

Consider a node $v_j \in \mathcal{V}^-[k]$ with flow balance $b_j[k]<0$ (since $b^{(p)}_j[k] \leq b_j[k]$, we have that also $b^{(p)}_j[k]<0$). \\
\noindent
We analyze below the following two cases:
\begin{enumerate}
\item all neighbors of node $v_j$ have negative or zero perceived flow balance,
\item at least one neighbor of node $v_j$ has positive perceived flow balance.
\end{enumerate}
In both cases, node $v_j$ will not make any flow changes on its edges. 
In the first case, the flow balance of node $v_j$ will not change (i.e., $b_j[k+1] = b_j[k] < 0$). 
In the second case, we have (i) $b^{(p)}_i[k] \geq 0$ for some $v_i \in  \mathcal{N}_j^-$, or (ii) $b^{(p)}_l[k] \geq 0$ for some $v_l \in  \mathcal{N}_j^+$.

For (i) we have that during the iteration $k$ of Algorithm~\ref{alg1_packetDr}, the incoming edge flows of $v_j$ might change by its in-neighbors (i.e., the flow of an incoming edge $(v_j,v_i)$ might be increased to be equal to $f_{ji}[k+1] = f_{ji}^{(i)}[k]$ for some $v_i \in \mathcal{N}_j^-$).
In this case (regardless if we have a packet drop during the transmission of $f_{ji}[k+1]$ from $v_i$ to $v_j$) we have that $b_j[k+1]$ is either positive or $|b_j[k+1]| < |b_j[k]|$ (i.e., the contribution of node $v_j$ to $\varepsilon[k+1]$ is either zero or smaller than its contribution to $\varepsilon[k]$ using the third statement in Proposition~\ref{PROP1_del}).
For (ii) we have that during iteration $k$ of Algorithm~\ref{alg1_packetDr}, the out-neighbor of $v_j$ might transmit the new outgoing edge flows to node $v_j$ (i.e., $v_j$ might receive the new $f_{lj}^{(l)}[k]$ from some $v_l \in \mathcal{N}_j^+$).
In this case, if $f_{lj}^{(l)}[k]$ suffers a packet drop, the flow balance of $v_j$ will not change (i.e., $|b_j[k+1]|=|b_j[k]|$) and thus the contribution of node $v_j$ to $\varepsilon[k+1]$ remains the same as its contribution to $\varepsilon[k]$.
If $f_{lj}^{(l)}[k]$ is transmitted successfully we have that $b_j[k+1]$ is either positive or $|b_j[k+1]| < |b_j[k]|$ (i.e., the contribution of node $v_j$ to $\varepsilon[k+1]$ is either zero or smaller than its contribution to $\varepsilon[k]$ using the third statement in Proposition~\ref{PROP1_del}).
As a result, for both cases, we have $\varepsilon[k+1] \leq \varepsilon[k]$ (using the third statement in Proposition~\ref{PROP1_del}).
\end{proof}

\begin{prop}
\label{PROP4_packDr}
Consider the problem formulation described in Section~\ref{preliminaries} where the integer circulation conditions in Theorem~\ref{IntTheoremCirc} are satisfied. Algorithm~\ref{alg1_packetDr} balances the flows in the graph in a finite number of steps, with probability one (i.e., $\exists \ k_0$ so that almost surely $\forall k \geq k_0$, $f_{ji}[k_0] = f_{ji}[k]$, $\forall (v_j,v_i) \in \mathcal{E}$ and $b_j[k] = b_j[k_0] =0$, $\forall v_j \in \mathcal{V}$). 
\end{prop}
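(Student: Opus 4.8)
The plan is to follow the structure of the proof of Proposition~\ref{PROP4_del} as closely as possible; the only genuinely new ingredient is a Borel--Cantelli argument that reduces the almost-sure claim to a sample path on which the deterministic reasoning of the bounded-delay case applies. First I would invoke Propositions~\ref{PROP2_packDr} and~\ref{PROP3_packDr}: the former shows that $\mathcal{V}^-[k]$ is non-increasing, and the latter that $\varepsilon[k]$ is non-increasing, non-negative and integer valued. A non-increasing integer sequence that is bounded below is eventually constant, so there exist $k_1$ and $\varepsilon^*\geq 0$ with $\varepsilon[k]=\varepsilon^*$ for all $k\geq k_1$; it then suffices to prove that $\varepsilon^*=0$ with probability one.

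For the probabilistic step, note that on every directed communication link a transmission is attempted at every iteration (Steps~4 and~7 of Algorithm~\ref{alg1_packetDr}; a node with $b^{(p)}_j[k]\leq 0$ still transmits, by Step~3). Along a fixed directed link the successful transmissions form a sequence of independent Bernoulli trials, each of probability $1-q_{ji}>0$, so by the second Borel--Cantelli lemma that link carries infinitely many successful transmissions with probability one; intersecting over the finitely many directed links, the event $E$ that every directed link carries infinitely many successful transmissions has probability one. It therefore suffices to show that on $E$ Algorithm~\ref{alg1_packetDr} balances the graph after finitely many steps.

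On $E$ I would argue by contradiction, as in Proposition~\ref{PROP4_del}. If $\varepsilon^*>0$ then, since $\sum_j b_j[k]=0$ by the second statement of Proposition~\ref{PROP1_del}, at every $k\geq k_1$ some node carries positive flow balance, and hence (as $\mathcal{V}$ is finite) at least one node has positive flow balance infinitely often; this lets me partition $\mathcal{V}$ into $\mathcal{V}^{(+)}$ (positive flow balance infinitely often), $\mathcal{V}^{(-)}=\lim_{k\to\infty}\mathcal{V}^-[k]$ (well defined by Proposition~\ref{PROP2_packDr}, and non-empty because $\varepsilon^*>0$), and the remaining, eventually-balanced nodes. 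The one place the packet-drop case differs from the bounded-delay case is the claim that on $E$ every node of $\mathcal{V}^{(+)}$ also attains positive \emph{perceived} flow balance infinitely often: after a dropped packet on an incoming edge such a node keeps $f^{(p)}_{ji}=f^{(j)}_{ji}\leq f_{ji}$ and thus under-counts that edge, but on $E$ a later successful transmission on that link restores $f^{(p)}_{ji}=f_{ji}$, and since the flows are bounded and integer the incoming flows eventually settle, so every increase an in-neighbour ever imposes is eventually perceived; this is the analogue of the footnote in the proof of Proposition~\ref{PROP4_del}. Given this, the remainder is a transcription of that proof: writing $\mathcal{S}:=\mathcal{V}^{(+)}$, the nodes of $\mathcal{S}$ attempt infinitely often to raise their outgoing-edge flows and lower their incoming-edge flows, and any such change that takes effect on an edge touching $\mathcal{V}^{(-)}$ either renders the corresponding node non-negatively balanced (contradicting the definition of $\mathcal{V}^{(-)}$) or strictly decreases the integer $\varepsilon[k]$ (contradicting $\varepsilon[k]=\varepsilon^*$ for $k\geq k_1$), so for all large $k$ one must have $f_{ji}[k]=\lceil l_{ji}\rceil$ for every $(v_j,v_i)\in\mathcal{E}^-_{\mathcal{S}}$ and $f_{lj}[k]=\lfloor u_{lj}\rfloor$ for every $(v_l,v_j)\in\mathcal{E}^+_{\mathcal{S}}$; the first statement of Proposition~\ref{PROP1_del} applied to $\mathcal{S}$ then gives $\sum_{(v_j,v_i)\in\mathcal{E}^-_{\mathcal{S}}}\lceil l_{ji}\rceil-\sum_{(v_l,v_j)\in\mathcal{E}^+_{\mathcal{S}}}\lfloor u_{lj}\rfloor=\sum_{v_j\in\mathcal{S}}b_j[k]>0$, which contradicts condition~(ii) of Theorem~\ref{IntTheoremCirc}. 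Hence $\varepsilon^*=0$ on $E$, i.e.\ with probability one; and once $b_j[k_0]=0$ for all $v_j$, Remark~\ref{perceivedSmaller} gives $b^{(p)}_j[k_0]\leq 0$ for all $v_j$, so no node triggers Step~2 and Step~6 leaves every outgoing flow unchanged, whence by induction $f_{ji}[k]=f_{ji}[k_0]$ and $b_j[k]=0$ for all $k\geq k_0$.

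I expect the main obstacle to be exactly the perceived-balance claim above: making rigorous, on $E$, that a node which is positively balanced infinitely often is also \emph{perceived} positively balanced infinitely often, and that the flow changes its neighbours attempt really do take effect often enough to force either termination or a strict drop of $\varepsilon$. This requires combining the bookkeeping of Remark~\ref{perceivedSmaller} with the four-case analysis of the flow updates preceding Algorithm~\ref{alg1_packetDr}, together with the observation that a bounded integer flow pushed only upward (resp.\ only downward) by one side settles to a constant; once that is in place, everything else transcribes directly from the proof of Proposition~\ref{PROP4_del}.
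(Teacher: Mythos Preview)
Your proposal is correct and follows essentially the same approach as the paper: the paper's proof likewise reduces to the event that every link carries infinitely many successful transmissions (arguing via the elementary bound $1-q_{lj}^{k_{lj}}$ rather than Borel--Cantelli, which is equivalent here) and then declares the remainder ``identical to the proof of Proposition~\ref{PROP4_del}.'' Your version is in fact more careful than the paper's, since you spell out the perceived-balance step and the terminal stability argument, whereas the paper simply defers both to the bounded-delay case.
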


\begin{proof} 
By contradiction, suppose Algorithm~\ref{alg1_packetDr} runs for an infinite number of iterations and its total imbalance remains positive (i.e., $\varepsilon[k]>0$ for all $k$). 
During the execution of the proposed distributed balancing algorithm, packets containing information are dropped with probability $q_{lj} <1$ for each communication link $(v_l,v_j) \in \mathcal{E}$ (we assume independence between packet drops at different time steps and different links and link directions).
During transmissions on link $(v_l, v_j)$, we have that at each transmission, a packet goes through with probability $1 - q_{lj} > 0$.
Thus, if we consider $k_{lj}$ consecutive uses of link $(v_l, v_j)$, the probability that at least one packet will go through is $1 - q_{lj}^{k_{lj}}$, which will be arbitrarily close to $1$ for a sufficiently large $k_{lj}$.
\noindent
Specifically, for any (arbitrarily small) $\epsilon > 0$, we can choose 
$$
k_{lj} = \left \lceil \frac{\log \epsilon}{\log q_{lj}} \right \rceil, 
$$
to ensure that each transmission goes through by $k_{lj}$ steps with probability $1 - \epsilon$. 

Suppose now that Algorithm~\ref{alg1_packetDr} runs for an infinite number of iterations (where infinite successful packet transmissions occurred on each link $(v_l, v_j)$, for a sufficiently large $k_{lj}$) and its total imbalance remains positive (i.e., $\varepsilon[k]>0$ for all $k$).
This means that there is always (at each $k$) at least one node with positive flow balance and thus the proof of this Proposition becomes identical to the proof of Proposition~\ref{PROP4_del}. 
\end{proof}


%
%
%
%
\section{SIMULATION RESULTS}\label{results}

In this section, we present simulation results and comparisons for the proposed distributed algorithms. 
Specifically, we illustrate the behavior of the proposed distributed algorithms for the following two scenarios: 
(i) the scenario where Algorithm~\ref{alg1_delays} operates in a randomly created graph of $20$ nodes where for every communication link $(v_j,v_i) \in \mathcal{E}$ there are bounded transmission delays $0 < \tau_{lj} < \overline{\tau}$, at each iteration $k$, where $\overline{\tau} = 10$ (we choose the delays randomly, independently between different links and link directions, but keep in mind that the profile of the delays could be anything as long as they are bounded),
(ii) the scenario where Algorithm~\ref{alg1_packetDr} operates in a randomly created graph of $20$ nodes where for every communication link $(v_j,v_i) \in \mathcal{E}$ there are packet drops with equal probability $q$, at each iteration $k$, where $0 \leq q < 1$ (independently between different links and link directions).
Note that the the integer circulation conditions hold for both of the following scenarios.

In Fig.~\ref{working20_del} we show the operation of Algorithm~\ref{alg1_delays} in a randomly created graph of $20$ nodes where for every communication link $(v_j,v_i) \in \mathcal{E}$ there are bounded transmission delays $0 < \tau_{lj} < \overline{\tau}$, at each iteration $k$, where $\overline{\tau} = 10$.
In the top plot, we show the \textit{absolute imbalance} $\varepsilon = \sum_{j=1}^{n} \vert b_j \vert$, $\forall v_j \in \mathcal{V}$ (blue line) and the perceived total imbalance $\varepsilon^{(p)} = \sum_{j=1}^{n} \vert b^{(p)}_j \vert$ (red line) against the number of iterations $k$.
In the In the bottom plot, we show \textit{nodes balances} $b_j[k]$ (as defined in Definition~\ref{DEFnodebalance}) as a function of the number of iterations $k$ for the distributed algorithm. 
As expected, the plots verify that the absolute imbalance $\varepsilon$ becomes equal to zero after a finite number of iterations, which means that Algorithm~\ref{alg1_delays} is able to obtain a set of integer flows that balance the corresponding digraph after a finite number of iterations in the presence of bounded transmission delays $0 < \tau_{lj} < \overline{\tau}$, where $\overline{\tau} = 10$, on each link $(v_l,v_j)\in \mathcal{E}$.

\begin{figure}[h]
\begin{center}
\includegraphics[width=0.8\columnwidth]{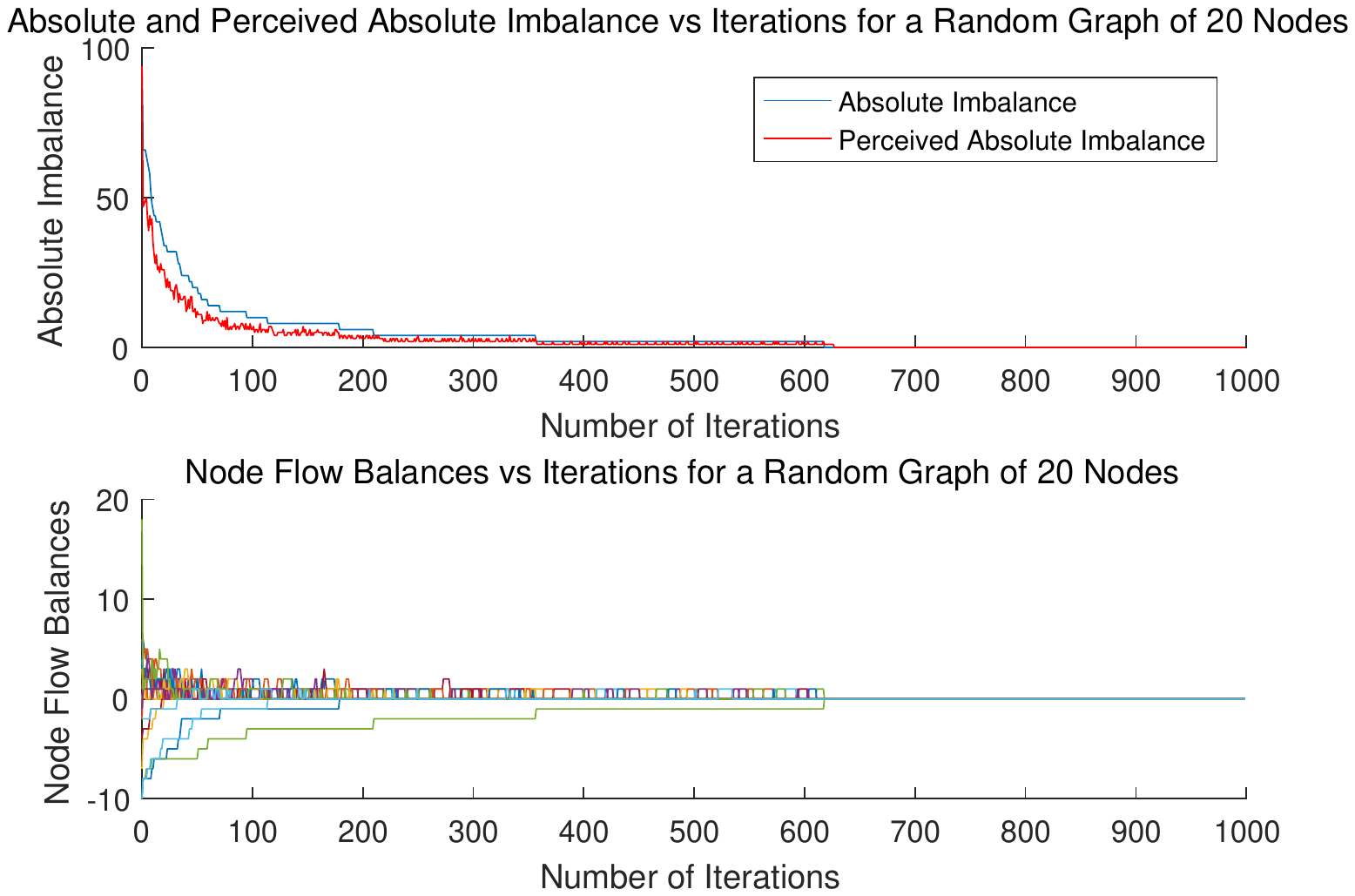}
\caption{Execution of Algorithm~\ref{alg1_delays} for the case when the integer circulation conditions hold for a random graph of $20$ nodes with transmission delays $0 < \tau_{lj} < \overline{\tau}$ where $\overline{\tau} = 10$. \emph{Top figure:} Total (absolute) imbalance $\varepsilon[k]$ (blue line) and perceived total imbalance $\varepsilon^{(p)}[k]$ (red line) plotted against number of iterations. \emph{Bottom figure:} Node flow balances $b_j[k]$ plotted against number of iterations.}
\label{working20_del}
\end{center}
\end{figure}



In Fig.~\ref{working20_packetDr} we show the operation of Algorithm~\ref{alg1_packetDr} for the same case as Fig.~\ref{working20_del}.
The plot suggests that Algorithm~\ref{alg1_packetDr} is able to obtain a set of integer flows that balance the corresponding digraph after a finite number of iterations in the presence of packet dropping links with probability $q = 0.8$, on each link $(v_l,v_j)\in \mathcal{E}$.

\begin{figure}[h]
\begin{center}
\includegraphics[width=0.8\columnwidth]{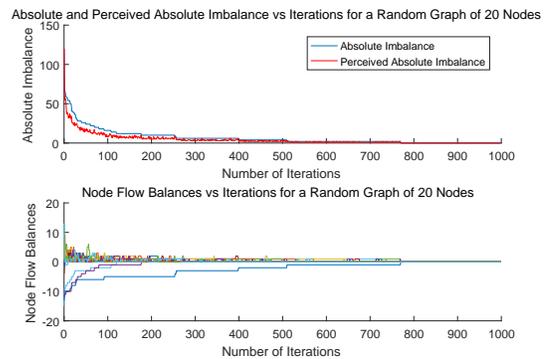}
\caption{Execution of Algorithm~\ref{alg1_packetDr} for the case when the integer circulation conditions hold for a random graph of $20$ nodes with packet drop probability $q_{ji} = 0.8$. \emph{Top figure:} Total (absolute) imbalance $\varepsilon[k]$ (blue line) and Perceived Total Imbalance $\varepsilon^{(p)}[k]$ (red line) plotted against number of iterations. \emph{Bottom figure:} Node flow balances $b_j[k]$ plotted against number of iterations.}
\label{working20_packetDr}
\end{center}
\end{figure}

%
%
%
%
\section{CONCLUSIONS}  \label{future}

We have considered the integer flow/weight balancing problem in a distributed multi-component system whose interconnection topology forms a strongly connected digraph, under the constraint that each edge flow lies within a given interval and communication between links may suffer bounded delays or unbounded delays (packet drops).
We have presented two distributed algorithms, which achieve integer flow-balancing in a multi-component system in the presence of lower and upper limit constraints on the edge flows/weights, and we analysed their functionality and established their correctness in the presence of transmission delays and packet dropping links. 
We also demonstrated their operation, performance, and advantages via various simulations. 

In the future, we plan to characterize the number of steps required for the proposed algorithm to terminate, calculate its computational cost, and compare it with existing algorithms on flow/weight balancing over networks.
We also plan to apply these techniques to distributed averaging consensus problems that are subject to quantized communication.

\bibliographystyle{IEEEtran}
\bibliography{bibliografia_consensus}

\vspace{-0.3cm}

\begin{IEEEbiography}[{\includegraphics[width=1in,height=1.25in,clip,keepaspectratio]{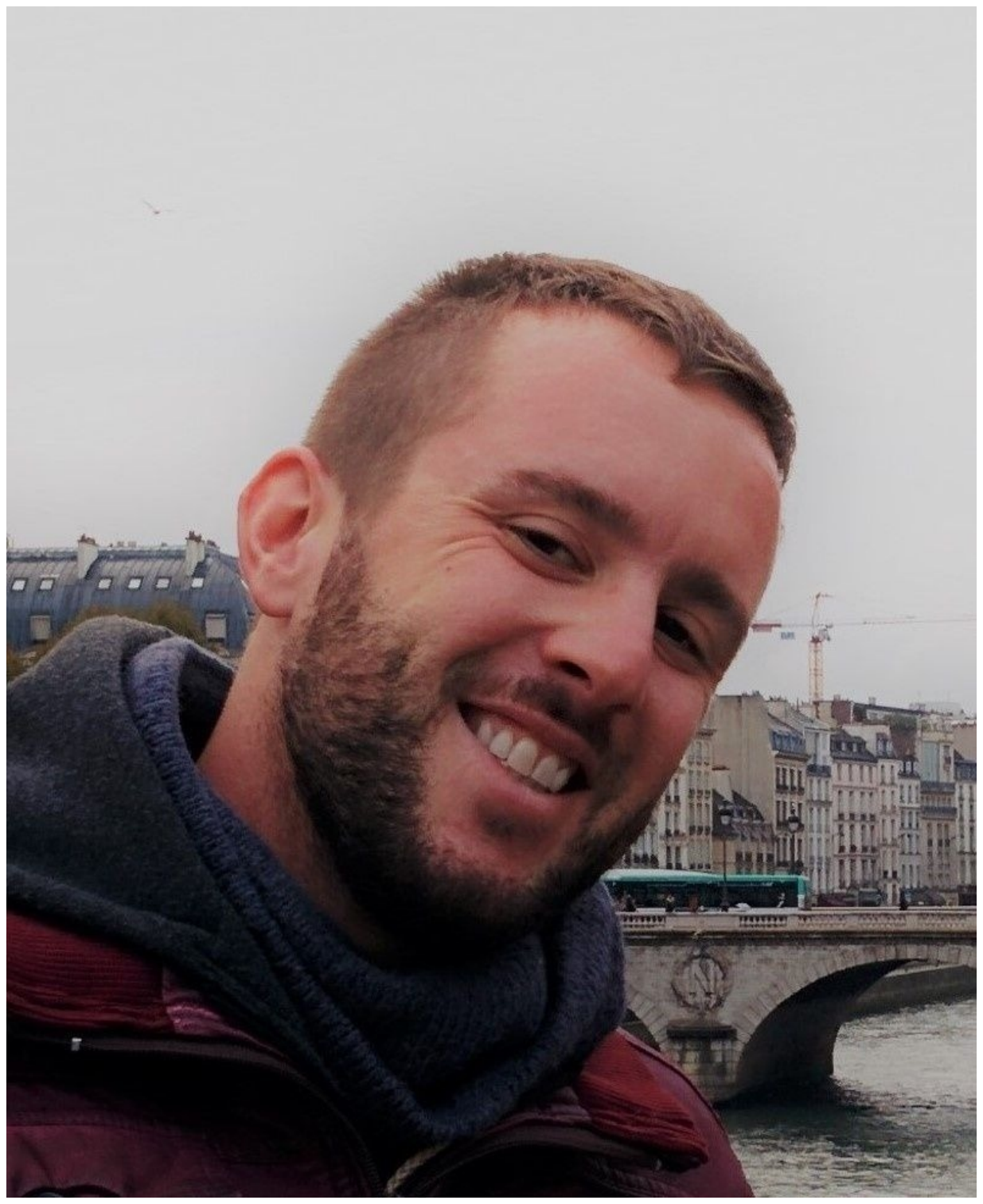}}]{Apostolos I. Rikos} received the B.Sc., M.Sc and Ph.D. degrees in Electrical Engineering from the Department of Electrical and Computer Engineering, University of Cyprus in 2010, 2012 and 2018 respectively.

Since November 2018, he has been working as a Special Scientist in KIOS Research and Innovation Centre of Excellence. His research interests are in the area of distributed systems, coordination and control of networks of autonomous agents, sensor networks and graph theory.
\end{IEEEbiography}

\vspace{-0.2cm}

\begin{IEEEbiography}[{\includegraphics[width=1in,height=1.25in,clip,keepaspectratio]{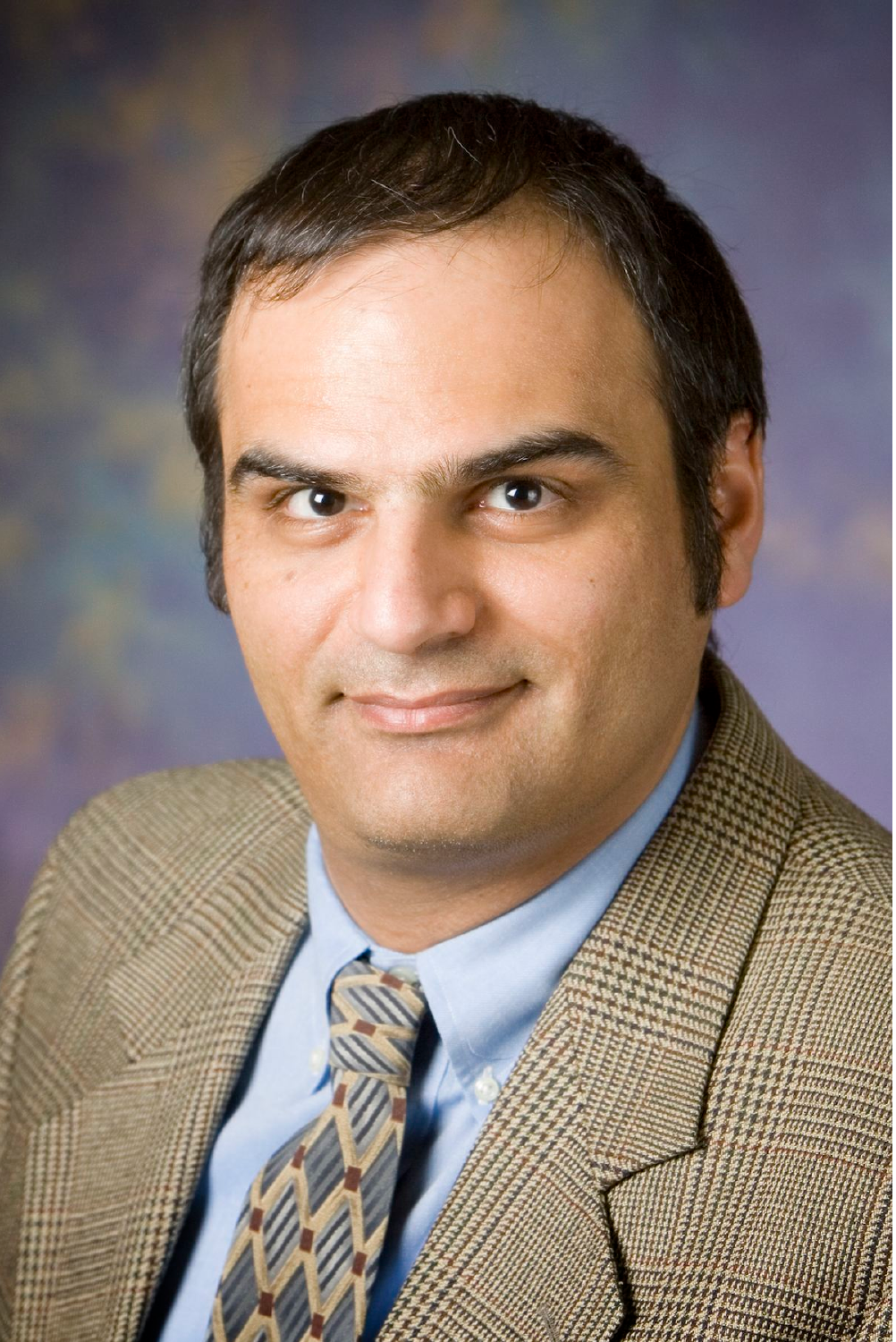}}]{Christoforos N. Hadjicostis}
(M'99, SM'05) received the S.B. degrees in electrical engineering, computer science and engineering, and in mathematics, the M.Eng. degree in electrical engineering and computer science in 1995, and the Ph.D. degree in electrical engineering and computer science in 1999, all from Massachusetts Institute of Technology, Cambridge. In 1999, he joined the Faculty at the University of Illinois at Urbana--Champaign, where he served as Assistant and then Associate Professor with the Department of Electrical and Computer Engineering, the Coordinated Science Laboratory, and the Information Trust Institute. Since 2007, he has been with the Department of Electrical and Computer Engineering, University of Cyprus, where he is currently Professor and Dean of Engineering. His research focuses on fault diagnosis and tolerance in distributed dynamic systems, error control coding, monitoring, diagnosis and control of large-scale discrete-event systems, and applications to network security, anomaly detection, energy distribution systems, medical diagnosis, biosequencing, and genetic regulatory models. He currently serves as Associate Editor of IEEE Transactions on Automatic Control, and IEEE Transactions on Automation Science and Engineering; he has also served as Associate Editor of IEEE Transactions on Control Systems Technology, and IEEE Transactions on Circuits and Systems~I.
\end{IEEEbiography}

\end{document}